\pgfplotsset{compat=1.15}
\newcommand{\lgu}{| \hspace{-0.18mm}u \hspace{-0.18mm}|}
\newcommand{\lgN}{| \hspace{-0.18mm}N \hspace{-0.18mm}|}
\newcommand{\lgx}[1]{| \hspace{-0.18mm}{#1} \hspace{-0.18mm}|}
\newcommand{\spc}{sp}
\newcommand{\voidsequence}{(\,)}
\newcommand{\sequencef}[3]{(#1_i)_{i\in[#2:#3]}}
\newcommand{\sequenceo}[3]{(#1_i)_{i\in[#2:#3[}}
\newcommand{\interf}[2]{[#1\!:\!#2]}
\newcommand{\intero}[2]{[#1\!:\!#2[}
\newcommand{\pair}[2]{\langle #1,#2\rangle}
\newcommand{\voidpair}{\pair{\voidsequence}{\voidsequence}}
\newcommand{\win}[3]{\langle (#1_i)_{i\in[0:#3]}, (#2_i)_{i\in[0:#3[}\rangle } 
\newcommand{\infwin}[2]{\langle (#1_i)_{0\leq i}, (#2_i)_{0\leq i}\rangle }
\newcommand{\partialwin}[4]{\langle (#2_i)_{i\in[#1:#4]}, (#3_i)_{i\in[#1:#4[}\rangle } 
\newcommand{\shrinkwin}[4]{\langle (#2_i)_{i\in[#1:#4]}, (#3_i[0])_{i\in[#1:#4[}\rangle }
\newcommand{\longwin}[4]{\langle (#1_i)_{i\in[0:#3]}, (#2_i)_{i\in[0:#3[}\rangle } 
\newcommand{\lbr}{[}
\newcommand{\rbr}{]}
\newcommand{\sizew}{n}
\newcommand{\card}{\mathrm{Card}}
\newcommand{\Bi}[1]{\ensuremath{\Box_{#1}}}
\newcommand{\Bim}[1]{\ensuremath{\Box_{#1}^{\,\mhyphen}}}
\newcommand{\Ri}[1]{\ensuremath{R_{#1}}}
\newcommand{\Kde}{\ensuremath{\mathbf{KDe}}}
\newcommand{\Kpi}{\ensuremath{\mathbf{KDe}(\Pi)}}
\newcommand{\K}{\ensuremath{\mathbf{K}}}
\newcommand{\nat}{\mathbb{N}}
\newcommand{\pipe}{\hspace{-0.21mm}|\hspace{-0.21mm}}
\newcommand{\inc}{\subseteq}
\newcommand{\uplim}[1]{\ensuremath{\chi(#1)}}
\newcommand{\wpp}{\ensuremath{w^+}}
\newcommand{\Wpp}{\ensuremath{\tilde{W}}}
\newcommand{\nextW}{\mbox{\texttt{NextW}}}
\newcommand{\satW}{\mbox{\texttt{SatW}}}
\newcommand{\chooseW}{\mbox{\texttt{ChooseW}}}
\newcommand{\chooseCCS}{\mbox{\texttt{ChooseCCS}}}
\newcommand{\sat}{\mbox{\texttt{Sat}}}
\newcommand{\CCS}{\mbox{\texttt{CCS}}}
\newcommand{\SF}{\mbox{\texttt{SF}}}
\newcommand{\CSF}{\mbox{\texttt{CSF}}}
\newcommand{\true}{\mbox{\texttt{True}}}
\newcommand{\all}{\mbox{\texttt{all}}}
\newcommand{\algand}{\mbox{\texttt{and}}}
\def\PSPACE{\mathbf{PSPACE}}
\def\card{\mathtt{Card}}
\def\NPSPACE{\mathbf{NPSPACE}}
\def\EXPSPACE{\mathbf{EXPSPACE}}
\def\EXPTIME{\mathbf{EXPTIME}}
\def\NEXPTIME{\mathbf{NEXPTIME}}
\def\coNEXPTIME{\mathbf{coNEXPTIME}}
\def\CPL{\mathbf{CPL}}
\def\K{\mathbf{K}}
\def\At{\mathbf{At}}
\def\Fo{\mathbf{Fo}}
\def\Axiom{\mathbf{A}}
\def\Rule{\mathbf{R}}
\mathchardef\mhyphen="2D
\begin{document}
\title{Recursive windows for grammar logics of bounded density}
\author{Olivier Gasquet\\
olivier.gasquet@irit.fr}
\institute{
Institut de recherche en informatique de Toulouse (IRIT)
\\
CNRS-INPT-UT
}
\maketitle
\begin{abstract}
We introduce the family of multi-modal logics of bounded density and with a tableau-like approach using finite \emph{windows} which were introduced in \cite{BalGasq25} and that we generalize to recursive windows. We prove that their satisfiability problem is {\bfseries PSPACE}-complete. As a side-product, the monomodal logic of density is shown to be in para-{\bfseries PSPACE}. 
\end{abstract}
\keywords{Modal logics \and Bounded density \and Satisfiability \and Complexity}
%
%
%
\section*{Introduction}
\paragraph{Context} The satisfiability problem of the modal logics defined by \emph{grammar axioms} of the form $\Diamond_{a_1}\ldots\Diamond_{a_m} p\rightarrow \Diamond_{b_1}\ldots\Diamond_{b_n} p$, is known to be undecidable in general~\cite{FarinasdelCerro1988-FARGL}.
For ``transitivity like'' axioms, extensive work has been done in e.g.\ \cite{Baldoni2}, \cite{Demri00} and \cite{Demri01}. They characterize a class of grammar logics with axioms of the form 
$\Diamond_{a_1}\ldots\Diamond_{a_m} p\rightarrow \Diamond_{a} p$, which are also known as \emph{reduction principle} (\cite{VBen76},\cite{CS94}) as they semantically correspond to the existence of shorter paths in the associated Kripke frames. These axioms, expressed as grammar rules $a\rightarrow^*_{\cal G} a_1\cdots a_m$, correspond to a linear grammar ${\cal G}$: left linear if $a=a_1$ or right linear if $a=a_m$. In the former case, most of the time they have an $\EXPTIME$-complete satisfiability problem, and a $\PSPACE$-complete one in the latter case when ${\cal G}$ is finite (there are $\EXPTIME$-complete ones when it is infinite). There are also some high lower bound for monomodal such logics as finite model property in \cite{Zakha} and $\EXPSPACE$-satisfiability in the more recently \cite{Lyon24}. \\
For \emph{grammar logics} containing what we call ``weak-density'' axioms: $\Diamond_{a} p\rightarrow \Diamond_{a_1}\ldots\Diamond_{a_m} p$, and while their decidability is easy to establish by means of a least filtration as defined in \cite{blackburn01}, giving a $\NEXPTIME$ lower bound for satisfiability, not much is known about their precise complexity, and we will prove $\PSPACE$-completeness for a particular fragment of them. \\

\paragraph{Motivation} Surely one may think these bounded-dense logics may appear somehow artificial, but in a temporal interpretation, it could make sense to take for granted that time is not infinitely decomposable and that if between two instants there is another one, this reasoning cannot be applied infinitely. These logics for which we prove $\PSPACE$-completeness aims at covering such a bounded notion of density (hence the title) by fixing an a priori bound to the possible depth of time-density not depending on the formulas one want to check. \\
On another hand, in section \ref{KDe}, we also prove, using the same apparatus that the monomal logic of density is in para-$\PSPACE$: for any given formula density is bounded its the modal depth. Hence, all formulas of fixed depth can be checked for satisfiability in polynomial space \emph{w.r.t.\ this modal depth}. 

\paragraph{Our result}%
In this paper, we study  a  subclass of such logics and  their complexity. They are defined by several density-like axioms of the form $\Diamond_i p\rightarrow \Diamond_i\Diamond_{i+1}  p$ for $i\in\{0,\cdots,\pi\}$ that we call them \emph{$\Pi$-dense logics} because they correspond to a bounded form of density of the accessibility relation of their models. If we again consider that these axioms may be expressed as grammar rules, but the other way round i.e.\ axiom $\Diamond_{a} p\rightarrow \Diamond_{a_1}\ldots\Diamond_{a_m} p$ correspond to rule $a\rightarrow^*_{\cal G} a_1\cdots a_m$, then a word $w=a_{w_1}.\cdots .a_{w_n}$ can be obtained from $a$ would imply that in Kripke frame for this logic $R_a\subseteq R_{a_{w_1}}\circ\cdots \circ R_{a_{w_n}}$. If this grammar is recursive (i.e.\ $a\in\{a_1,\cdots,a_m\}$, then such property implies the existence of infinitely long path between possible worlds. The easy case is when $G$ is right-linear since in this case, the intermediary worlds between, say, $u$ and $v$, are farther and farther from $u$ and at some point they are too far for influencing truth-value of formulas in $v$. The difficult case, which we investigate is left-linearity, i.e.\ when $a=a_1$. Then intermediary worlds are all equally close to $u$. Our idea to deal with this was to exploit a loop argument in the style of what is done in the seminal Ladner's paper \cite{Ladner77} and thoroughly used elsewhere. But, our loops are in the width, not in the depth. To our knowledge, this is a new kind of argument. This idea was successfully applied in \cite{BalGasq25} in which the complexity of the satisfiability problem for the bimodal logic of weak density $\K_1\oplus\K_2$ $+\Diamond_1 p\rightarrow \Diamond_1 \Diamond_2p$ has been thus proved to be $\PSPACE$-complete. The approach made use of so-called \emph{windows} which are finite sliding partial view of a tableau structure. In this previous paper, windows were just sequences of sets. Here, in order to address the case of several such left-linear axioms, we extend them to \emph{recursive windows}, finite structures made of such sets. This generalization provides the adequate tool for proving that the satisfiability problem for what we call $\Pi$-dense logics is $\PSPACE$-complete in general.  

\paragraph{Our contribution} Beyond the mere result of complexity for $\Pi$-dense logics, our contribution also consists in having designed a new tool for exploring complexity of some modal logics, namely \emph{recursive windows} and by exploiting loops \emph{in the width}. One may ask whether there is a connection between our windows and so-called mosaics of \cite{Marx00} that were first introduced in \cite{Nemeti95}. In fact, not really, mosaics are small pieces of model (two of a few worlds) aimed at being put together to form a model, and even if windows may be viewed as a kind of overlapping mosaics, membership in $\PSPACE$ is mostly due to the loop argument which is an original important feature. 

\paragraph{Plan of the paper}
Sections \ref{pi-density} is devoted to the definition of $\Pi$-dense logics, axiomatically and semantically. Then after some basic definitions needed in section \ref{settings}, we define the notion of window in section \ref{windows} and show some important properties that will be needed in the sequel. Right after, section \ref{algorithm} describes a non-deterministic algorithm for testing the satisfiability of some finite set of formulas. It is followed by the proof of its soundness and completeness in section \ref{section:complexity:of:Kpi} together with the proof that it runs in polynomial space. We conclude by identifying some open problems that could be tackled by using our windows. 
\section{Logics of $\Pi$-density}\label{pi-density}
Let $\pi\in\nat$. The $\Pi$-dense modal logic $\Kpi$ is the direct product of $\pi+1$ versions of $\K$ with $\pi$ additional density axioms: $\bigoplus_{0\leq k\leq\pi} \K_i$+$\bigwedge_{0\leq k<\pi} \Bi{i}\Bi{i+1} p\rightarrow \Bi{i} p$.
Obviously, \Kpi\ is a conservative extension of ordinary modal logic $\K$.
Hence the satisfiability problem for \Kpi\ is $\PSPACE$-hard\label{pspace-hardness}. In the sequel we prove that it is in $\PSPACE$.

\paragraph{Syntax}
Let $\Pi$ be the set of natural numbers from 0 to $\pi$, and $\Pi^-$ be the set of natural numbers from 0 to $\pi-1$. \\
Let $\At$ be the set of all atoms $(p,q,\ldots)$.
The set $\Fo$ of all formulas $(\phi,\psi,\ldots)$ is defined by
$$\phi:=p\mid\bot\mid\neg\phi\mid(\phi\wedge\phi)\mid\Bi{i}\phi$$
where $p$ ranges over $\At$ and $i$ in $\Pi$.
As before, we follow the standard rules for omission of the parentheses, we use the standard abbreviations for the Boolean connectives $\top$, $\vee$ and $\rightarrow$ and for all formulas $\phi$, $d(\phi)$ denotes the modal depth of $\phi$, and $\pipe\phi\pipe$ denotes the number of occurrences of symbols in $\phi$.
For all formulas $\phi$, we write $\Diamond_{i}\phi$ as an abbreviation  of $\neg\Bi{i}\neg\phi$.

\paragraph{Semantics}
A {\em frame}\/ is a pair $(S,(\Ri{i})_{i\in\Pi})$ where $S$ is a nonempty set and for each $i\in\Pi$, $\Ri{i}$ is a binary relation on $S$, i.e.\ $R_i\subseteq W^2$.
A frame $(S,(\Ri{i})_{i\in\Pi})$ is {\em $\Pi$-dense}\/ if for all $s,t\in S$ and $i\in\Pi^-$, if $sR_{i}t$ then there exists $u\in S$ such that $sR_{i}u$ and $uR_{i+1}t$.
A {\em valuation on a frame $(S,(\Ri{i})_{i\in\Pi})$}\/ is a function $V\ :\ \At\longrightarrow \wp(u)$.
A {\em model}\/ is a $3$-tuple $(S,(\Ri{i})_{i\in\Pi},V)$ consisting of a frame $(S,(\Ri{i})_{i\in\Pi})$ denoted by ${\cal F}(M)$, and a valuation $V$ on that frame. 
A {\em model based on the frame $(S,(\Ri{i})_{i\in\Pi})$}\/ is a model of the form $(S,(\Ri{i})_{i\in\Pi},V)$.
With respect to a model $M=(S,(\Ri{i})_{i\in\Pi},V)$, for all $x\in S$ and for all formulas $\phi$, the {\em satisfiability of $\phi$ at $x$ in $M$}\/ (in symbols $M,x\models\phi$) is inductively defined as usual.
In particular,
\begin{itemize}
\item For all $i\in\Pi$, $x\models\Bi{i}\phi$ if and only if for all $y\in S$, if $x\Ri{i}y$ then $y\models\phi$.
\end{itemize}
As a result,
\begin{itemize}
\item For all $i\in\Pi$, $x\models\Diamond_{i}\phi$ if and only if there exists $y\in S$ such that $x\Ri{i}y$ and $y\models\phi$.
\end{itemize}
A formula $\phi$ is {\em true in a model $(S,(\Ri{i})_{i\in\Pi},V)$}\/ (in symbols $(S,(\Ri{i})_{i\in\Pi},V)\models\phi$) if for all $s\in s$, $s\models\phi$.
A formula $\phi$ is {\em valid in a frame $(S,(\Ri{i})_{i\in\Pi})$}\/ (in symbols $(S,(\Ri{i})_{i\in\Pi})\models\phi$) if for all models $(S,(\Ri{i})_{i\in\Pi},V)$ based on $(S,(\Ri{i})_{i\in\Pi})$, $(S,(\Ri{i})_{i\in\Pi},V)\models\phi$.
A formula $\phi$ is {\em valid in a class ${\mathcal F}$ of frames}\/ (in symbols ${\mathcal F}\models\phi$) if for all frames $(S,(\Ri{i})_{i\in\Pi})$ in ${\mathcal F}$, $(S,(\Ri{i})_{i\in\Pi})\models\phi$.
\paragraph{A decision problem}
Let $DP_{\Kpi}$ be the following decision problem:
\begin{description}
\item[input:] a formula $\phi$,
\item[output:] determine whether $\phi$ is valid in the class of all $\Pi$-dense frames.
\end{description}
Using the fact that the least filtration of a $\Pi$-dense model is $\Pi$-dense, one may readily prove that $DP_{\Kpi}$ is in $\coNEXPTIME$.
We will prove in what follows that $DP_{\Kpi}$ is in $\PSPACE$.
\paragraph{Axiomatization}
In our language, a {\em modal logic}\/ is a set of formulas closed under uniform substitution, containing the standard axioms of $\CPL$, closed under the standard inference rules of $\CPL$, containing the axioms $(\Axiom1_{i})$ $\Bi{i} \top$ and $(\Axiom2_{i})$ $\Bi{i} p\wedge\Bi{i} q\rightarrow \Bi{i}(p\wedge q)$ (for $i\in\Pi$)
and closed under the inference rules

\[(\Rule1_{i}) \frac{p\rightarrow q}{\Bi{i}p\rightarrow \Bi{i}q} \mbox{(for $i\in\Pi$)}\]
Let \Kpi\ be the least modal logic containing the formulas $\Bi{i}\Bi{i+1}p\rightarrow \Bi{i}p$ for $i\in\Pi^-$. 
As is well-known, \Kpi\ is equal to the set of all formulas $\phi$ such that $\phi$ is valid in the class of all $\Pi$-dense frames. 
This can be proved by using the so-called canonical model construction. Moreover, as the reader may easily verify, the class of $\Pi$-dense frames is closed under disjoint union ($\sqcup)$. Fig.~\ref{fig:frame} shows what these frames look like locally. 
\begin{figure}[!ht]
\begin{centering}
  \fbox{
  \begin{tikzpicture}[scale=0.5]
\coordinate (u) at (4,8) ;
\coordinate (v0) at (16,8) ;
\coordinate (v1) at (12,7) ;
\coordinate (v2) at (8,6);
\coordinate (v3) at (6,5.5) ;

\coordinate (v03) at (11,5.5);
\coordinate (v02) at (12,6);
\coordinate (v01) at (14,7) ;
\coordinate (v12) at (8,5) ;
\coordinate (v11) at (10,6) ;
\coordinate (v13) at (7,4.5);
\coordinate (v001) at (15,7) ;
\coordinate (v002) at (14,6) ;
\coordinate (v003) at (13.5,5.5) ;
\coordinate (v013) at (11,4) ;
\coordinate (v113) at (7.5,3.5) ;

\coordinate (v011) at (13,6) ;
\coordinate (v012) at (12,5) ;
\coordinate (v101) at (11,6) ;
\coordinate (v102) at (10,5) ;
\coordinate (v103) at (9.5,4.5);
\coordinate (v111) at (9,5) ;
\coordinate (v112) at (8,4) ;
\draw (u) node[above]{$u$};
\draw (v0) node[above]{$v$};
\draw  (u) -- (v0) node [midway,sloped]{$\circ$};
\draw  (u) -- (v1) node [midway,sloped]{$\circ$};
\draw  (u) -- (v2) node [midway,sloped]{$\circ$};
\draw (v2) -- (v1) node [midway,sloped]{{\small $\square$}} -- (v0) node [midway,sloped]{{\small $\square$}} ;
\draw [dashed] (v3) -- (v2) ;
\draw (v2) -- (v12) node [midway]{{\small $\square$}};
\draw (v2) -- (v11) node [midway]{{\small $\square$}};
\draw  (v12) -- (v112);
\draw  (v12) -- (v111);
\draw  (v12) -- (v11) -- (v1);
\draw [densely dotted] (v112) -- (v111) -- (v11);
\draw  (v11) -- (v102);
\draw (v11) -- (v101);
\draw [densely dotted] (v102) -- (v101) -- (v1);
\draw (v2) -- (v1) -- (v0);
\draw (v1) -- (v02) node [midway]{{\small $\square$}};
\draw (v1) -- (v01) node [midway]{{\small $\square$}};
\draw (v02) -- (v01) -- (v0);
\draw [dashed] (v03) -- (v02);
\draw [dashed] (v13) -- (v12);
\draw [dashed] (v103) -- (v102);
\draw [dashed] (v113) -- (v112);
\draw [dashed] (v013) -- (v012);
\draw [dashed] (v003) -- (v002);

\draw  (v01) -- (v002);
\draw (v01) -- (v001);
\draw  (v02) -- (v012);
\draw (v02) -- (v011);
\draw  [densely dotted] (v012) -- (v011) -- (v01)  ;
\draw  [densely dotted] (v002) -- (v001) -- (v0);

\end{tikzpicture}
}
\caption{Local structure of a $\Pi$-dense frame between $u$ and $v$, with $Pi=\{0\cdots 3\}$.\\ $\circ$: $R_0$, {\small $\square$}: $R_1$, no label: $R_2$, dotted: $R_3$. Edges are left-to-right, or else top-to-bottom.}\label{fig:frame}
\end{centering}
\end{figure}
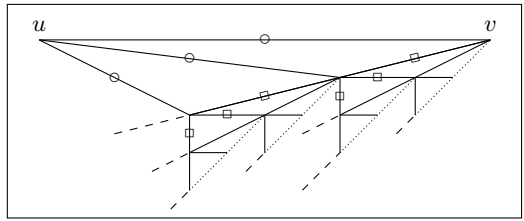

\section{Basic settings}\label{settings}
Let us introduce some notations: 
\begin{itemize}
    \item Subtraction on integers: $n\dotdiv m=\max(0,n-m)$
    \item intervals of $\nat$: 
    \item [] \indent $\interf{a}{b}
    =\{a,a+1,\cdots,b\}$ (if $b<a$ then $\interf{a}{b}=\emptyset$)
    \item [] \indent $\intero{a}{b}=\{a,a+1,\cdots,b-1\}$ (if $b\leq a$ then $\intero{a}{b}=\emptyset$)
    \item In the context of sequences of sets of formulas, $\voidsequence$ denotes the empty sequence
    and $\{\_\}$ denotes the set of members of a sequence: $y\in\{(y_i)_{i\in I}\}$ iff $\exists i\in I\colon y=y_i$
    \item If $x\in X$ and $y\in Y$, then $\pair{x}{y}$ denotes the corresponding element of $X\times Y$
    \item Membership is extended to cartesian product: $\{\pair{X}{Y}\}=\{X\}\cup \{Y\}$
    \item For $s$ a finite set of formulas $d(s)=\max\{d(\phi):\ \phi\in s\}$ and $\pipe s\pipe=\Sigma\{\pipe\phi\pipe:\ \phi\in s\}$
    \item For $s$ finite set of formulas, $\Bim{i}(s)=\{\phi\colon \Bi{i}\phi \in s \}$ (notice that $d(\Bim{i}(s))\leq d(s)\dotdiv 1$ and that $\Bim{i}(s\cup s')=\Bim{i}(s)\cup\Bim{i}(s')$). 
\end{itemize} 
\paragraph{Subsets of formulas\\}
Let $\CSF(s)$ (for Classical SubFormulas) be the least set $s'$ of formulas such that for all formulas $\phi,\psi$,
\begin{itemize}
\item $s\subseteq s'$,
\item if $\phi \wedge \psi\in s'$ then $\phi \in s'$ and $\psi\in s'$,
\item if $\neg (\phi \wedge \psi)\in s'$ then $\neg \phi\in s'$ and $\neg \psi\in s'$,
\item if $\neg \phi\in s'$ then $\phi \in s'$.
\end{itemize}
In other respect, $\SF(s)$ (for SubFormulas) is the least set $s'$ of formulas s. th.\ for all formulas $\phi,\psi$,
\begin{itemize}
\item $s\subseteq s'$,
\item if $\phi \wedge \psi\in s'$ then $\phi \in s'$ and $\psi\in s'$,
\item if $\neg (\phi \wedge \psi)\in s'$ then $\neg \phi\in s'$ and $\neg \psi\in s'$,
\item if $\neg \phi\in s'$ then $\phi \in s'$,
\item if $\Bi{i} \phi\in s'$ then $\phi \in s'$ (for $i\in\Pi$),
\item if $\neg\Bi{i} \phi\in s'$ then $\neg\phi \in s'$ (for $i\in\Pi$). 
\end{itemize}
It is well-known that $\lgx{\SF(u)}\leq c_{\mathrm{sf}}.\lgu$ for some constant $c_{\mathrm{sf}}>0$. \\

\noindent For any finite set $s$ of formulas, let $\CCS(s)$ be the set of all finite sets $u$ of formulas such that $s\subseteq u\subseteq \CSF(s)$ which are Classically Consistent and classically Saturated, i.e.\ for all formulas $\phi,\psi$:
\begin{itemize}
\item if $\phi \wedge \psi\in u$ then $\phi \in u$ and $\psi\in u$,
\item if $\neg (\phi \wedge \psi)\in u$ then $\neg \phi\in u$ or $\neg \psi\in u$,
\item if $\neg \neg \phi\in u$ then $\phi \in u$,
\item $\bot\not\in u$,
\item if $\neg \phi\in u$ then $\phi\not\in u$.
\end{itemize}
For finite consistent set $s$ of formulas, the elements of $\CCS(s)$ are in fact simply unsigned saturated open branches for tableaux \footnote{Originating in the works of Evert W. Beth, probably in \cite{Beth} as pointed by M. Guillaume in \cite{Guillaume} and made famous with the work of M. Fitting in \cite{Fitting85} for his extension to modal logics}.
As a result, for all finite sets $s$ of formulas, an element of $\CCS(s)$ is called a {\em consistent classical saturation (CCS) of $s$.}
As the reader may easily verify, for all finite sets $u, w$ of formulas, if $u\in \CCS(s)$ then $d(u)=d(s)$ and $\CCS(u)=\{u\}$.
Moreover, there exists an integer $c_{\mathrm{ccs}}$ such that for all finite sets $u, s$ of formulas, if $u\in \CCS(s)$ then $\pipe u\pipe\leq c_{\mathrm{sf}}.\pipe s\pipe$.
\begin{proposition} [Properties of {\CCS}s]\label{prop-CCS}
For all finite sets $u, v, s, s_1, s_2$ of formulas,
\begin{enumerate}
\item\label{One} if $u\in \CCS(s\cup v)$ and $v\in \CCS(s')$ then $u\in \CCS(s\cup s')$,
\item\label{Two} if $u\in \CCS(s\cup s')$ then it exists $v\in \CCS(s)$ and $v'\in\CCS(s')$ s.th.\ \ $v\cup v'=u$,
\item\label{Three} if $u\in \CCS(s'\cup v)$ and $v$ is a \CCS\ then it  exists $v'\in \CCS(s')$ s.th.\ $v\cup v' = u$,
\item\label{Four} if $u\in \CCS(s'\cup v)$ and $v$ is a \CCS\ then $d(u\setminus v)\leq d(s')$,
\item\label{Five} if $u$ is true at a world $x\in s$ of a \Kpi-model $M=(S,R_{a},R_{b},V)$, then 
the set $\SF(u)\cap \{\phi\colon M,x\models \phi\}$ is in $\CCS(u)$.
\end{enumerate}
\end{proposition}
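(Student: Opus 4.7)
The strategy is to handle the five items in order, since later ones reduce almost directly to the earlier ones. The whole argument rests on three elementary facts about $\CSF$ which I would establish first: $\CSF(s \cup s') = \CSF(s) \cup \CSF(s')$ (the closure conditions apply formula-wise), $\CSF$ is idempotent, and $\CSF$ preserves modal depth (it only unfolds $\wedge$ and $\neg$). I would also use the paper's earlier observation that $\CCS(w) = \{w\}$ whenever $w$ is itself a CCS.

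For item 1, both defining inclusions of $\CCS(s \cup s')$ follow directly: $s \cup s' \subseteq u$ since $s \subseteq u$ and $s' \subseteq v \subseteq u$, while $u \subseteq \CSF(s \cup v) = \CSF(s) \cup \CSF(v) \subseteq \CSF(s) \cup \CSF(s') = \CSF(s \cup s')$, using $v \subseteq \CSF(s')$ together with idempotence; saturation and consistency of $u$ are intrinsic and do not depend on the parameter. For item 2, I would take $v := u \cap \CSF(s)$ and $v' := u \cap \CSF(s')$ and verify $v \in \CCS(s)$, $v' \in \CCS(s')$, and $v \cup v' = u$, the last equality following from $u \subseteq \CSF(s \cup s') = \CSF(s) \cup \CSF(s')$. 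Item 3 is an immediate corollary: apply item 2 to $s' \cup v$ and invoke $\CCS(v) = \{v\}$ to force the $v$-component of the decomposition to equal $v$. Item 4 then notes $u \setminus v \subseteq v' \subseteq \CSF(s')$ and uses depth preservation of $\CSF$ to conclude $d(u \setminus v) \leq d(s')$.

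For item 5, I would let $T := \SF(u) \cap \{\phi : M, x \models \phi\}$ and check the clauses defining $\CCS(u)$. The inclusion $u \subseteq T$ is immediate since each $\phi \in u$ satisfies $M, x \models \phi$ and $u \subseteq \SF(u)$. Each classical-saturation clause is then a one-to-one match with a semantic clause in $M$: if $\phi \wedge \psi \in T$ then $x \models \phi$ and $x \models \psi$, and both conjuncts remain in $\SF(u)$ hence in $T$; if $\neg(\phi \wedge \psi) \in T$ then $x$ refutes at least one conjunct, so $\neg \phi$ or $\neg \psi$ lies in $T$; the $\neg\neg$, $\bot$, and $\neg \phi$ clauses are analogous, using $x \not\models \bot$ and the semantic clause for negation.

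The main obstacle is the classical-saturation verification in item 2. The substantive point is that the classical decomposition of a formula in $\CSF(s)$ never leaves $\CSF(s)$; once this is isolated, the truncation $v = u \cap \CSF(s)$ inherits saturation from $u$ by a routine case split, and items 3 and 4 fall out as short corollaries. Item 5 also requires some care in lining up the saturation clauses with the semantic clauses of $M$, but introduces no new idea.
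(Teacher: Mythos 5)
Your proposal is correct and follows essentially the same route as the paper: item 2 by intersecting $u$ with the relevant $\CSF$-closures (your witnesses $v=u\cap\CSF(s)$ and $v'=u\cap\CSF(s')$ are the evidently intended, correct form of the paper's garbled ``$v=s\cap\CSF(u)$ and $v'=s'\cap\CSF(v)$,'' which as written would collapse to $v=s$), item 3 from item 2 via $\CCS(v)=\{v\}$, item 4 from the resulting decomposition $u\setminus v\subseteq v'$ together with the depth bound, and items 1 and 5 by the direct verifications the paper leaves to the reader. Your preliminary isolation of the $\CSF$ facts (distribution over union, idempotence, depth preservation) merely makes explicit what the paper dismisses as ``properties of classical open branches of tableaux.''
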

\begin{proof}
Item~(\ref{One}) is an immediate consequence of the properties of classical open branches of tableaux.
As for Item~(\ref{Two}), take $v=s\cap\CSF(u)$ and $v'=s'\cap\CSF(v)$.
Item~(\ref{Three}) follows from Item~(\ref{Two}).
Concerning Item~(\ref{Four}), if $u\in \CCS(s'\cup v)$ then by Item~(\ref{Three}), there exists $v'\in \CCS(s')$ and $v\cup v' = u$.
Therefore, $u\setminus v\subseteq v'$ and $d(u\setminus v)\leq d(v')=d(s')$.
Finally, about Item~(\ref{Five}), the reader may easily verify it 
by applying the definition clauses of $\models$. 
\end{proof}
\section{Recursive windows}\label{windows}
Recursive windows (or just windows) are polynomialy large pieces of information sufficient to explore the possibility of the existence of a model and to ensure $\Kpi$-satisfiability. But, in order to guarantee completeness, we must be able to expand them to piece of model, to this aim we need new, bigger objects (up to exponentialy big) or even infinitely big. All in all, we will call them all windows as well.  \\
In this section, we will use $\lambda$ to denote either an integer function on \CCS\ $u$, with $\lambda(u)\geq d(u)$ and $\lambda$ is increasing w.r.t.\ $d$ (i.e.\ $d(u)>d(v)\rightarrow \lambda(u)>\lambda(v)$), or $\lambda$ may be the constant function $\lambda(u)=\infty$ for all $u$. 
\paragraph{Windows\\} 
Let $u$, $v_0$ be two \CCS, $k\in \Pi$ and $n\geq d(u)$. 
A $(k,n,\lambda)$-window for $(u,v_0)$ denoted by $W$ is a pair $\pair{\cal V}{\cal W}$ with:
\begin{itemize}
        \item if $k=\pi$, $\pair{\cal V}{\cal W}=\voidpair$ (empty window)
        \item if $k<\pi$, $\pair{\cal V}{\cal W}$ is a pair of non-empty sequences with: 
        \begin{itemize}
            \item ${\cal V}=\sequencef{v}{0}{\sizew}$ is a sequence of formulas such that:
        \begin{itemize}   
            \item for each $i\in\intero{0}{\sizew}\colon v_i\in\CCS(\Bim{k}(u)\cup\Bim{k+1}(v_{i+1}))$
            \item if $\sizew\neq \infty$ then $v_{\sizew}\in \CCS(\Bim{k}(u))$
        \end{itemize}
        \item ${\cal W}=\sequenceo{W}{0}{\sizew}$ is a sequence of $(k+1,\lambda(v_{i+1}),\lambda)$-windows for $(v_{i+1},v_i)$, called a subwindow of $W$. 
        \end{itemize}
\end{itemize}
The reader could wonder why we need $n$ and $\lambda$. Number $n$ can be thought of as the length of the window (number of nodes at the top level), and in the course of expending a window, as we will see below, we will need to consider window whose length is different (and longer) than that of its subwindows. Also, according to the case, we need subwindows of (linear) length and of (exponential) length, thus the need of a function $\lambda$ which will essentially be either $d$ or $\chi$. 
We inductively define the set $\{W\}$ of members of a window by: 
\begin{itemize}
    \item $\{\voidpair\}=\emptyset$
    \item $\{\win{v}{W}{\sizew}\}=\{(v_i)_{i\in[0:\sizew]}\}\cup \bigcup_{i\in[0:\sizew[}\{W_i\}$
\end{itemize}

\noindent N.B.\ for any set $v\in \{W\}$, we have: $\lgx{v}\leq c_{\mathrm{sf}}.\lgu$ and $d(v)< d(u)$. 
\paragraph{Partial windows\\} 
Let $T=\win{v}{W}{\sizew}$ be a $(k,\sizew,\lambda)$-window for $(u,v_0)$ as above. Let $0\leq a<b\leq \sizew$, then 
$\partialwin{a}{v}{W}{b}$ is a $(k,b-a+1,\lambda)$-partial window of $W$. Of course, $W$ is a partial window of $W$ (take $a=0$ and $b=\sizew$). 


\paragraph{Pointwise $k$-inclusion of partial windows\\} 
Let $u$, $v^1_0$ and $v^2_0$ be \CCS, let $k\in \Pi$. Let $W^1$ be a $(k,\sizew,\lambda)$-window for $(u,v^1_0)$ and let $W^2$ be a $(k,\sizew,\lambda)$-window for $(u,v^2_0)$. Let $0\leq a<b\leq l$ and let $SW_1$ and $SW_2$ be two $(k,b-a+1,\lambda)$-partial windows of $W_1$ and $W_2$, we define the pointwise $k$-inclusion of the partial windows $SW_1\sqsubseteq^k SW_2$ by:
\begin{itemize}
        \item $\voidpair\sqsubseteq^{\pi} \voidpair$
        \item if $k<\pi\colon \partialwin{a}{v^1}{W^1}{b}\sqsubseteq^k \partialwin{a}{v^2}{W^2}{b}$ iff:
        \begin{itemize}
            \item for $i\in\intero{a}{b}\colon v^2_i\in\CCS(v^1_i\cup\Bim{k+1}(v^2_{i+1}))$
            \item and for $i\in\intero{a}{b}\colon W^1_i\sqsubseteq^{k+1} W^2_i$ 
            \item [] N.B.\ if $d(u)=1$ then for $i\in\intero{a}{b}\colon d(\Bim{k+1}(v^2_{i}))=0$ and hence $v^2_{i}=v^1_{i}$.
        \end{itemize}
\end{itemize}

\paragraph{Continuations of windows\\}
Let $u$, $v^1_0$ and $v^2_0$ be \CCS, let $k\in \Pi^-$, and $d(u)\geq 1$. Let $W^1$ be the $(k,\sizew,\lambda)$-window for $(u,v^1_0)$ given by: $\win{v^1}{W}{\sizew}$  and $W^2$ be the $(k,\sizew,\lambda)$-window for $(u,v^2_0)$ given by: $ \win{v^2}{W}{\sizew}$. We say that $W_2$ is a $k$-continuation of $W_1$ iff 
$$\partialwin{1}{v^1}{W^1}{\sizew} \sqsubseteq^k \partialwin{0}{v^2}{W^2}{\sizew-1}$$
(beware of indexes: the ``end'' of $W_1$ is included ($\sqsubseteq^k$) in the ``beginning'' of $W_2$). 

\begin{lemma}\label{k2kplusone-window}
    Let $u$, $v_0$ be two \CCS, let $k\in \Pi^-$, and $d(u)\geq 1$. Let $W_1$ and $W_2$ be two $(k,\sizew,\lambda)$-windows for $(u,v_0)$ with $W_1=\win{v^1}{W^1}{\sizew}$
     and $W_2=\win{v^2}{W^2}{\sizew}$, and suppose $W_2$ is a $k$-continuation of $W_1$, then with
     \begin{itemize}
         \item $(z_0,D_0)=(v^1_0,W^1_0)$
         \item $z_{\sizew+1}=v^2_{\sizew}$
         \item and for $1<z\leq \sizew\colon (z_i,D_i)=(v^2_{i-1},W^2_{i-1})$
     \end{itemize}
          we have $\win{z}{D}{\sizew+1}$ is a $(k,\sizew+1,\lambda)$-window for $(u,v_0)$. 
\end{lemma}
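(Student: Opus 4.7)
The proof is a direct clause-by-clause verification that $T = \win{z}{D}{\sizew + 1}$ satisfies every requirement of the definition of a $(k, \sizew + 1, \lambda)$-window for $(u, v_0)$. Since $W_1$ is itself a window for $(u, v_0)$ we have $z_0 = v^1_0 = v_0$, and by construction $T$ has the correct arity ($\sizew + 2$ sets and $\sizew + 1$ subwindows).

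First I would dispatch everything that is inherited directly from $W_1$ or $W_2$. The terminal clause $z_{\sizew + 1} \in \CCS(\Bim{k}(u))$ is immediate because $z_{\sizew + 1} = v^2_\sizew$ and this clause holds already for $W_2$. For each index $i \geq 1$, both $z_i = v^2_{i-1}$ and the subwindow $D_i = W^2_{i-1}$ come unchanged from $W_2$, so the \CCS-clause $z_i \in \CCS(\Bim{k}(u) \cup \Bim{k+1}(z_{i+1}))$ and the subwindow clause ``$D_i$ is a $(k+1,\lambda(z_{i+1}),\lambda)$-window for $(z_{i+1},z_i)$'' are mere instances of the corresponding clauses of $W_2$ (with $(z_{i+1},z_i)=(v^2_i,v^2_{i-1})$). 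The subwindow clause for $D_0 = W^1_0$ is inherited from $W_1$ in exactly the same way, using the hypothesis $d(u)\geq 1$ to guarantee the induction on $k$ is well-founded.

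The only clause that is not immediately inherited is the \CCS-clause at the \emph{seam} $i = 0$, namely $z_0 \in \CCS(\Bim{k}(u) \cup \Bim{k+1}(z_1))$, where $z_0$ comes from $W_1$ while $z_1$ comes from $W_2$. This is precisely where the $k$-continuation hypothesis is used. Unfolding the definition of pointwise $k$-inclusion $\sqsubseteq^k$ supplied by the continuation, one obtains, at the overlapping positions, a \CCS\ bridging the $\Bim{k+1}$-data of $W_1$ at the first index with the $\Bim{k+1}$-data of $W_2$ at the corresponding index. Combining this bridge with the clause $v^1_0 \in \CCS(\Bim{k}(u) \cup \Bim{k+1}(v^1_1))$ that already holds in $W_1$, and invoking Proposition~\ref{prop-CCS}(1) (together with (3) to pass from one \CCS\ to another) transports the \CCS\ membership across the junction and yields precisely the clause demanded at $i=0$.

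The main obstacle is therefore this transport across the seam: reconciling the $\Bim{k+1}$-requirements at the junction between $W_1$ and $W_2$ using only the pointwise-inclusion information provided by the continuation. Once the compatibility of boxed formulas at the junction is spelled out cleanly, the \CCS-algebra of Proposition~\ref{prop-CCS} closes the argument, and everything else reduces to routine index bookkeeping.
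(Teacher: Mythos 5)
Your clause-by-clause decomposition matches the paper's, and the parts you dispatch as directly inherited from $W_2$ (the terminal clause $z_{\sizew+1}=v^2_{\sizew}\in\CCS(\Bim{k}(u))$ and the \CCS- and subwindow clauses at indices $i\geq 1$) are indeed fine. But there is a genuine gap exactly at the point you yourself flag as the main obstacle: the seam clause $v^1_0\in\CCS(\Bim{k}(u)\cup\Bim{k+1}(v^2_0))$ does \emph{not} follow from unfolding $\sqsubseteq^k$ together with the \CCS-algebra of Proposition~\ref{prop-CCS}. The continuation only gives memberships of the shape $v^2_{i-1}\in\CCS(v^1_i\cup\Bim{k+1}(v^2_i))$; in particular $v^2_0$ may a priori contain $\Bi{k+1}$-formulas that are absent from $v^1_1$ (contributed, through the chain, by the new tail of $W_2$), and then nothing forces their unboxings into $v^1_0$. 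No combination of items (\ref{One}) and (\ref{Three}) of Proposition~\ref{prop-CCS} can rule this out, because the obstruction is quantitative, not algebraic. What closes the argument in the paper is the degree-decay fact (Proposition~\ref{proposition-degree}), proved by \emph{descending} induction from the far end of the window: $d(v^2_{i-1}\setminus v^1_i)\leq d(u)+i\dotdiv(\sizew+1)$, since each backward step through the chained \CCS-conditions strips one modal degree off the formulas that $W_2$ adds relative to $W_1$. At $i=1$, using the requirement $\sizew\geq d(u)$ built into the definition of windows, this yields $d(v^2_0\setminus v^1_1)=0$, hence $\Bim{k+1}(v^2_0)=\Bim{k+1}(v^1_1)$ (the converse inclusion coming from $v^1_1\subseteq v^2_0$), and only then does $W_1$'s clause $v^1_0\in\CCS(\Bim{k}(u)\cup\Bim{k+1}(v^1_1))$ transfer across the seam. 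This is precisely where the hypothesis that windows have length at least $d(u)$ is consumed; it is the engine of the whole sliding-window method, and your proposal replaces it with a hope that ``\CCS-algebra closes the argument''.

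A second, related error: the subwindow clause for $D_0=W^1_0$ is \emph{not} inherited from $W_1$ ``in exactly the same way'' as the others. In $W_1$, $W^1_0$ is a $(k+1,\lambda(v^1_1),\lambda)$-window for $(v^1_1,v^1_0)$, but in the extended window it must serve as a $(k+1,\lambda(z_1),\lambda)$-window for $(z_1,z_0)=(v^2_0,v^1_0)$: its parent set changes from $v^1_1$ to $v^2_0$. The paper re-verifies this (step 3 of its proof) by rewriting the top-level \CCS-conditions of $W^1_0$, replacing the boxed formulas of the old parent $v^1_1$ by those of the new parent $v^2_0$ --- legitimate only because of the same identity $\Bim{k+1}(v^2_0)=\Bim{k+1}(v^1_1)$ (which also makes the length parameters agree, since $d(v^2_0)=d(v^1_1)$). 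So the missing degree argument is needed twice, and without Proposition~\ref{proposition-degree} or an equivalent induction your proof does not go through.
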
 
\begin{proof}
First we need to prove the following proposition:
\begin{proposition}\label{proposition-degree}
For $1\leq i\leq \sizew\colon d(v^2_{i-1}\setminus v^1_i) \leq d(u)+i\dotdiv (\sizew+1)$
by descending induction on $i\in\{1,\ldots,\sizew\}$: either $i=\sizew$, or $i<\sizew$.
In the former case, $v^2_{\sizew-1}\in \CCS(v^2_{\sizew}\cup v^1_{\sizew})$.
Since $v^2_{\sizew}\in \CCS(\Bim{k}(u))$ and $v^1_{\sizew}\in \CCS(\Bim{k}(u))$, then $v^1_{\sizew}\leq d(u)\dotdiv 1$ and $d(\Bim{k+1}(v^2_{\sizew}))\leq d(u)\dotdiv 2$.
Consequently, $d(v_{i-1}^2\setminus v_{\sizew}^1) \leq d(u)\dotdiv 1$. In the latter case, $v_{i-1}^2\in \CCS(\Bim{k+1}(v_{i}^2)\cup v_{i}^1)$; and since $v_{i}^2=v_{i}^2\cup v_{i+1}^1=(v_{i}^2\setminus v_{i+1}^1)\cup v_{i+1}^1$, and $\Bim{k+1}(A\cup B)=\Bim{k+1}(A)\cup \Bim{k+1}(B)$, we have $v_{i-1}^2\in \CCS(\Bim{k+1}(v_{i}^2\setminus v_{i+1}^1)\cup \Bim{k+1}(v_{i+1}^1)\cup v_{i}^1)$; but $\Bim{k+1}(v_{i+1}^1)\subseteq v_{i}^1$, hence $v_{i-1}^2\in \CCS(\Bim{k+1}(v_{i}^2\setminus v_{i+1}^1)\cup v_{i}^1)$. 
By Prop.\ \ref{prop-CCS}.\ref{Three}: $\exists v'\colon v'\in\CCS(\Bim{k+1}(v_{i}^2\setminus v_{i+1}^1))$ and $v_{i-1}^2=v_{i}^1\cup v'$. Thus $v_{i-1}^2\setminus v_{i}^1\subseteq v'$, and  $d(v_{i-1}^2\setminus v_{i}^1)\leq d(v')=d(\Bim{k+1}(v_{i}^2\setminus v_{i+1}^1))\leq d(v_{i}^2\setminus v_{i+1}^1)\dotdiv 1 \leq d(u)+(i+1)\dotdiv (\sizew+1)\dotdiv 1$ (by induction hypothesis) $\leq d(u)+i+\dotdiv (\sizew+1)$.  \end{proposition}

Now, we check that $\longwin{z}{D}{l+1}{l}$ is indeed a $(k,\sizew+1,\lambda)$-window for $(u,v_0)$ by examining the definition of continuations. \\
Since on the one hand $z_0,z_1,z_2,\cdots,z_{\sizew+1}=v^1_0,v^2_0,v^2_1,\cdots,v^2_{\sizew}$, and on the other hand $D_0,D_1,D_2,\cdots,D_{\sizew}=W^1_0,W^2_0,W^2_1,\cdots,W^2_{\sizew-1}$, we have:
\begin{enumerate}
    \item $v_{\sizew}^2 \in \CCS(\Bim{k}(u))$
    \item 
    \begin{enumerate}
    \item $v_{\sizew-1}^2\in\CCS(v_{\sizew}^1\cup\Bim{k+1}(v_{\sizew}^2))$, and since $v_{\sizew}^1\in\CCS(\Bim{k}(u))$, it comes $v_{\sizew-1}^2\in\CCS(\Bim{k}(u)\cup\Bim{k+1}(v_{\sizew}^2))$;
    \item take $i\in[0:\sizew-2]$. Then $v_{i-1}^2\in\CCS(v_i^1\cup\Bim{k+1}(v_i^2))$, and since $v_i^1\in\CCS(\Bim{k}(u)\cup\Bim{k+1}(v_{i+1}^1)$, by Prop.\ \ref{prop-CCS}.\ref{One}: $v_{i-1}^2\in\CCS(\Bim{k}(u)\cup\Bim{k+1}(v_i^2)\cup\Bim{k+1}(v_{i+1}^1))$. But $v_{i+1}^1\inc v_i^2$, hence $v_{i-1}^2\in\CCS(\Bim{k}(u)\cup\Bim{k+1}(v_i^2))$; 
    \item to conclude for condition 2, it remains to prove that $v_0^1\in \CCS(\Bim{k+1}(v_0^2)\cup \Bim{k}(u))$. By Prop.\ \ref{proposition-degree}, $d(v_0^2\setminus v_1^1)\leq d(u)\dotdiv\sizew\leq 0$, hence if $\Bi{k+1} \phi\in v_0^2$ then $\Bi{b} \phi\in v_1^1$ and thus $\Bim{k+1}(v_0^2)=\Bim{k+1}(v_1^1)$. Since $v_0^1\in \CCS(\Bim{k+1}(v_1^1)\cup \Bim{k}(u))$, then $v_0^1\in \CCS(\Bim{k+1}(v_0^2)\cup \Bim{k}(u))$.
    \end{enumerate}
    \item We verify condition 3 by proving that for each $i\in\intero{0}{\sizew}$ $W^2_i$ is a $(k+1,\lambda(v^2_{i+1}),\lambda)$-window for $(v^2_{i+1},v^2_i)$ and that $W^1_0$ is a $(k+1,\lambda(v^2_0),\lambda)$-window for $(v^2_0,v^1_0)$. It is immediate for the first ones. 
    Concerning $W^1_0$: if $k=\pi-1$ and $W^1_0=\voidpair$ then we are done, else let $W^1_0=\langle(v^{0,1}_i),(W^{0,1}_i)\rangle$ (we omit the ranges of the sequences) ; since it is a $(k+1,\lambda(v^1_1),\lambda)$-window for $(v^1_1,v^1_0)$ we have, for all $0\leq i< \lambda(v^1_1)\colon v^{0,1}_i\in\CCS(\Bim{k+1}(v^{0,1}_{i+1})\cup\Bim{k}(v^1_1))$, but recall from 3.c above that $\Bim{k}(v^2_0)\inc\Bim{k}(v^1_1)$, and since each $W^{0,1}_i$ is a $(k+2,\lambda(v^{0,1}_{i+1}),\lambda)$-window for $(v^{0,1}_{i+1},v^{0,1}_i)$, hence conditions are met to state that $W^1_0$ is a $(k+1,\lambda(v^2_0),\lambda)$-window for $(v^2_0,v^1_0)$. 
    \end{enumerate}
    
\end{proof}

\begin{lemma}\label{prolongation-to-infinite-window}
    Let $u$, $v_0$ be two \CCS, let $k\in \Pi$, if there exists a $(k,\chi(u),d(u))$-window for $(u,v_0)$ for a ``sufficiently large'' $\chi(u)$ which depends on $u$, then there exists a $(k,\infty,d)$-window for $(u,v_0)$. Such a window will be called ``maximal''. 
\end{lemma}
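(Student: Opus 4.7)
The plan is to use a pigeonhole / loop argument on the top-level of the given finite window, and then to build the infinite window by iterating the block between the two coinciding positions. No recursion on $k$ is needed: the subwindows of the infinite object will be reused verbatim from the given finite window, where they already satisfy the $(k{+}1, d(v_{i+1}), d)$-specification demanded by a $(k,\infty,d)$-window.

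Let $\win{v}{W}{\sizew}$ be the given $(k,\chi(u),d)$-window with $\sizew=\chi(u)$. An easy induction on $i$ from the last index down establishes that every $v_i$ is a subset of $\SF(u)$: indeed $v_{\sizew}\in\CCS(\Bim{k}(u))\subseteq 2^{\SF(u)}$, and if $v_{i+1}\subseteq\SF(u)$ then $\Bim{k+1}(v_{i+1})\subseteq\SF(u)$, so $v_i\in\CCS(\Bim{k}(u)\cup\Bim{k+1}(v_{i+1}))\subseteq 2^{\SF(u)}$ as well. Since $|\SF(u)|\leq c_{\mathrm{sf}}\lgu$, each $v_i$ ranges over a set of cardinality at most $2^{c_{\mathrm{sf}}\lgu}$. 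The bound $\chi(u):=2^{c_{\mathrm{sf}}\lgu}+1$ makes ``sufficiently large'' precise, and the pigeonhole principle then yields indices $0\leq a<b\leq \sizew$ with $v_a=v_b$. Define $\phi\colon \nat\to\interf{0}{b-1}$ by $\phi(i)=i$ for $i<b$ and $\phi(i)=a+((i-a)\bmod(b-a))$ for $i\geq b$, and set $v'_i:=v_{\phi(i)}$ and $W'_i:=W_{\phi(i)}$. The claim is that $\infwin{v'}{W'}$ is a $(k,\infty,d)$-window for $(u,v_0)$.

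The verification reduces to the ``wrap-around'' positions $i$ for which $\phi(i)=b-1$ and $\phi(i+1)=a$, since at every other position the window clauses are directly inherited from $\win{v}{W}{\sizew}$. At such a wrap-around, one must show that $v_{b-1}\in\CCS(\Bim{k}(u)\cup\Bim{k+1}(v_a))$ and that $W_{b-1}$ is a $(k{+}1,d(v_a),d)$-window for $(v_a,v_{b-1})$. Both statements follow from the corresponding facts about the pair $(v_b,v_{b-1})$ recorded in the original window, together with the identity $v_a=v_b$ which gives $\Bim{k+1}(v_a)=\Bim{k+1}(v_b)$ and $(v_a,v_{b-1})=(v_b,v_{b-1})$.

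The main obstacle I anticipate is pinning down the correct invariant for the pigeonhole. A naive invariant is the full pair $(v_i,W_i)$, which would blow up $\chi(u)$ into a tower of exponentials and defeat the $\PSPACE$ goal. The key observation is that equality of the top-level CCS $v_a=v_b$ alone suffices, because in the periodic construction the subwindow $W_{b-1}$ at the ``right end'' of the block realigns automatically with the subwindow $W_a$ at the ``left end'' as soon as the endpoint CCSs coincide; the subwindow data is then simply carried along the periodic unfolding without any additional matching requirement.
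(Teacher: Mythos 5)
Your proof is correct, and it takes a genuinely sharper route than the paper. The paper's proof pigeonholes on entire sliding $(k,d(u),d)$-partial windows of length $d(u)+1$ (including all their recursive subwindow data), each a $k$-continuation of the previous: it first bounds the number of \CCS{}s inside one such window by a polynomial $Q(\lgu)$ of degree $\pi+1$, deduces at most $2^{P(\lgu)}$ distinct windows ($P$ of degree $\pi+2$), sets $\uplim{u}=2^{P(\lgu)}+d(u)$, finds two identical blocks $W_h=W_{h+\delta}$, and unfolds periodically. You instead pigeonhole only on the top-level \CCS{}s $v_i$, which is legitimate for exactly the reason you identify: the definition of a window imposes no coherence constraint between consecutive subwindows --- each $W_i$ is constrained solely by the pair $(v_{i+1},v_i)$, so once $v_a=v_b$ the subwindows can be carried verbatim around the loop, and only the wrap-around clauses $v_{b-1}\in\CCS(\Bim{k}(u)\cup\Bim{k+1}(v_a))$ and the re-typing of $W_{b-1}$ need checking, both immediate from $v_a=v_b$. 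This buys a smaller sufficient bound ($2^{c_{\mathrm{sf}}\lgu}+1$ instead of $2^{P(\lgu)}+d(u)$) and dispenses with the window-counting recurrence and the continuation machinery inside this lemma (though note that the counting of \CCS{}s per window is still needed elsewhere, in Lemma~\ref{polyspace}, to bound $\lgx{W^+}$). One factual correction to your ``main obstacle'' paragraph: pigeonholing on the full pairs $(v_i,W_i)$ would \emph{not} produce a tower of exponentials --- this is essentially what the paper does, and it stays single-exponential precisely because a recursive window contains only polynomially many \CCS{}s; the resulting $\uplim{u}$ still has a polynomial-size binary representation, so the $\PSPACE$ goal survives either way. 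Two trivial omissions in your write-up, neither a gap: the case $k=\pi$ (both windows are $\voidpair$) and a remark that your construction retains the property, used in the completeness proof of Lemma~\ref{completeness}, that every $v'_i$ and $W'_i$ of the infinite window is a copy of some $v_j$, $W_j$ from the given finite one --- which it does by construction.
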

\begin{proof}
    In order to precise ``sufficiently large", let us first compute the number of $\CCS$ in a $(k,d(u),d)$-window which is either $W=\win{v}{W}{d(u)}$ or $W=\voidpair$ according to $k$. If we let $n=\pi-k$, this number is bounded above by the following recurrent inequalities: \\
    $$\begin{array}{ll}
         s(n) &  = (\mbox{if } n=0 \mbox{ then } 1 \mbox{ else }{d(u)}+\Sigma_{i=0}^{d(u)-1}s(n-1)\\
         & \leq (\mbox{if } n=0 \mbox{ then } 1 \mbox{ else } d(u) + d(u).s(n-1)\\
         &\leq d(u) + d^2(u)+ d^2.s(n-1)\\
         &\leq d(u) + d^2(u)+ \cdot +d^n(u)+ d^n.s(0)\\
         &\leq Q(d(u)) \hfill\mbox{\indent for some polynomial $Q$ of degree $\pi+1$}\\
         &\leq Q(\lgu)\\
    \end{array}$$

    Each \CCS\ is a member of $\SF(u)$ and there are at most $2^{\lgx{\SF(u)}}=2^{c_{\mathrm{ccs}}.\lgu}$ of them. Hence, there are at most $(2^{c_{\mathrm{ccs}}.\lgu})^{Q(\lgu)}$ distinct $(k,d(u),d))$-windows for $(u,v_0)$, i.e.\ $2^{P(\lgu)}$ for some polynomial $P$ of degree $\pi+2$. \\
We claim that $\chi(u)=2^{P(\lgu)}+d(u)$: let $W$ be a $(k,\uplim{u},d(u))$-window for $(u,v_0)$, it can be broken into the sequence $(W_j)_{j\in[0\:\uplim{u}]}$ of $(k,d(u),d))$-partial windows for $(u,v_0)$ each of them being a $k$-continuation of the previous. Then, because of the above bound, at least two of them are identical: there exists integers $h,\delta$ such that $\delta\neq 0$ and $h+\delta \leq \uplim{u}$ and $W_h=W_{h+\delta}$. Let $(\Wpp_j)_{0\leq j}$ be the infinite sequence such that for all $j\leq h$, $\Wpp_j=W_j$ and for all $j>h$, $\Wpp_j=W_{h+((j-h)\!\!\!\mod \delta)}$. By construction, for all $j\geq 0$, $\Wpp_{j+1}$ is a $k$-continuation of $\Wpp_j$. For all $j\geq 0$, suppose that $\Wpp_j=\win{v^j}{W^j}{d(u)}$, and set $W=\langle (v^i_0)_{0\leq i}, (W^i_0)_{0\leq i}\rangle$ which is a $(k,\infty,d)$-window for $(u,v_0)$. 
    \end{proof}

\begin{lemma}\label{from-model-to-window}
    Let $M=(S,(\Ri{i})_{i\in\Pi},V)$ be a $\Pi$-dense model. Let $u,v_0$ be two \CCS\ and suppose that there exists $x,y_0\in S$ such that: $(x,y_0)\in\Ri{k}$ and $M,x\models u$ and $M,y_0\models v_0$. Then for any integer $\sizew$, there exists $W$ a $(k,\chi(u),\chi)$-window and all its $\CCS$ are $\Kpi$-satisfiable (i.e.\ for $(u,v_0)$ such that for all $v\in W\colon v$ is $\Kpi$-satisfiable). 
\end{lemma}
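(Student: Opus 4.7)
The plan is to proceed by downward induction on $k\in\Pi$, starting from $k=\pi$ and descending. The base case $k=\pi$ is immediate: take $W=\voidpair$, which by definition is a $(\pi,\chi(u),\chi)$-window; its set of members is empty, so the satisfiability claim is vacuous.

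For the inductive step $k<\pi$, set $n:=\chi(u)$. The first task is to build the horizontal chain of semantic witnesses underlying the top level of the window. Starting from $x\Ri{k}y_0$, $\Pi$-density applied along the $R_k$-edge yields some $y_1\in S$ with $x\Ri{k}y_1$ and $y_1\Ri{k+1}y_0$. Reiterating from $x\Ri{k}y_1$ gives $y_2$, and so on, producing a sequence $y_0,y_1,\ldots,y_n$ with $x\Ri{k}y_i$ for all $i$ and $y_{i+1}\Ri{k+1}y_i$ for every $i<n$.

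Next I would construct the sequence $(v_i)_{0\leq i\leq n}$ from right to left via Prop.~\ref{prop-CCS}.\ref{Five}. Put $v_n:=\SF(\Bim{k}(u))\cap\{\phi\colon M,y_n\models\phi\}$; since $x\Ri{k}y_n$ and $M,x\models u$, every formula of $\Bim{k}(u)$ holds at $y_n$, so Prop.~\ref{prop-CCS}.\ref{Five} gives $v_n\in\CCS(\Bim{k}(u))$. Then, for $0<i<n$, with $v_{i+1}$ already in hand, take $v_i:=\SF(\Bim{k}(u)\cup\Bim{k+1}(v_{i+1}))\cap\{\phi\colon M,y_i\models\phi\}$: $x\Ri{k}y_i$ forces $M,y_i\models\Bim{k}(u)$, while $y_{i+1}\Ri{k+1}y_i$ together with $M,y_{i+1}\models v_{i+1}$ forces $M,y_i\models\Bim{k+1}(v_{i+1})$, and Prop.~\ref{prop-CCS}.\ref{Five} again delivers the required CCS. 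At index $0$ the externally prescribed $v_0$ must satisfy $v_0\in\CCS(\Bim{k}(u)\cup\Bim{k+1}(v_1))$, so the first density witness $y_1$ has to be chosen so that its associated $v_1$ projects back through $\Bim{k+1}$ into exactly the $\Box$-content that $v_0$ already classically saturates.

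Finally, for every $i\in\intero{0}{n}$ the pair $(y_{i+1},y_i)$ lies in $\Ri{k+1}$ with $M,y_{i+1}\models v_{i+1}$ and $M,y_i\models v_i$, so the induction hypothesis at level $k+1$ supplies a $(k+1,\chi(v_{i+1}),\chi)$-window $W_i$ for $(v_{i+1},v_i)$ whose members are all $\Kpi$-satisfiable. Setting $W:=\langle(v_i)_{0\leq i\leq n},(W_i)_{0\leq i<n}\rangle$ then meets the definition of a $(k,\chi(u),\chi)$-window for $(u,v_0)$, and every $v\in\{W\}$ consists, by construction, of formulas true at some world of $M$ and is therefore $\Kpi$-satisfiable. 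The main obstacle I expect is the initial alignment just described: because the recursive construction propagates constraints rightward but $v_0$ is fixed externally, the first witness $y_1$ cannot be picked arbitrarily, and a small combinatorial argument in the style of Prop.~\ref{proposition-degree} may be needed to show that a compatible $y_1$ always exists within the model.
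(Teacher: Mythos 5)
Your overall strategy coincides with the paper's: descending induction on $k$ with $\voidpair$ at $k=\pi$, a chain $y_0,y_1,\ldots$ of density witnesses with $x\Ri{k}y_i$ and $y_{i+1}\Ri{k+1}y_i$, sets $v_i$ cut out of the theories of the $y_i$ via Prop.~\ref{prop-CCS}.\ref{Five}, and the induction hypothesis at level $k+1$ supplying the subwindows $W_i$ for $(v_{i+1},v_i)$. But your proof is not finished: you end by conceding that the index-$0$ condition $v_0\in\CCS(\Bim{k}(u)\cup\Bim{k+1}(v_1))$ is an open ``alignment'' problem, to be fixed by a careful choice of the first density witness $y_1$ plus an argument in the style of Prop.~\ref{proposition-degree}. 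That is a genuine gap, and the suggested fix points in the wrong direction. The mechanism that closes it involves no choice of witness at all: define every $v_i$ ($i\geq 1$) \emph{uniformly} as $\SF(\Bim{k}(u))\cap\{\phi\colon M,y_i\models\phi\}$, as the paper does. Since $\SF(\Bim{k}(u))$ is closed under subformulas, $v_{i+1}\subseteq\SF(\Bim{k}(u))$ forces $\Bim{k+1}(v_{i+1})\subseteq\SF(\Bim{k}(u))$; in particular your right-to-left recursion is illusory, because $\SF(\Bim{k}(u)\cup\Bim{k+1}(v_{i+1}))=\SF(\Bim{k}(u))$, so no constraints propagate along the row and the CCS condition holds for every $i\geq 1$ by the same three checks (containment, saturation, consistency as a subset of the theory of $y_i$). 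At index $0$ the same closure observation gives $\Bim{k+1}(v_1)\subseteq\SF(\Bim{k}(u))$, and each member of $\Bim{k+1}(v_1)$ is true at $y_0$ because $y_1\Ri{k+1}y_0$; hence $\Bim{k}(u)\cup\Bim{k+1}(v_1)\subseteq v_0$ as soon as $v_0$ contains $\SF(\Bim{k}(u))\cap\{\phi\colon M,y_0\models\phi\}$ --- which is exactly how the lemma is invoked in Corollary~\ref{existence-of-window}, where $v_0=\SF(u)\cap y_0$ and $\SF(\Bim{k}(u))\subseteq\SF(u)$. So what is needed is not a cleverly chosen $y_1$ (any density witness works) and not Prop.~\ref{proposition-degree}, which is a degree bound about $k$-continuations and plays no role here, but the fixed, subformula-closed vocabulary.

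One caveat in your favour: as literally stated, the lemma allows $v_0$ to be an arbitrary \CCS\ merely satisfied at $y_0$, and for such a $v_0$ the index-$0$ window condition can genuinely fail (take $v_0$ disjoint from $\Bim{k}(u)$); the paper's own proof asserts the fact ``for all $i\geq 0$'' while its definition of $v_i$ only covers $i\geq 1$, and the statement is really used only with the particular $v_0$ constructed in Corollary~\ref{existence-of-window}. So your instinct that index $0$ is the sore point is sound; but a complete proof must resolve it by the closure argument above (or by strengthening the hypothesis on $v_0$), and your proposal as written leaves that step open.
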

\begin{proof}[By descending induction on $k$]\\

\begin{itemize}
    \item If $k=\pi$, we are done with $\voidpair$. 
    \item If $k<\pi$, since $(x,y_0)\in\Ri{k}$ and $M$ is $\Pi$-dense, let $(y_i)_{i\geq 0}$ be the $(k,\infty)$-sequence for $(x,y_0)$ in $(S,(\Ri{i})_{i\in\Pi})$.\\
    For each $i\geq 1$ let $v_i=\SF(\Bim{k}(u))\cup y_i$. Trivially $v_i$ is $\Kpi$-satisfiable. \\
    First, we establish the following fact: for all $i\geq 0$, $v_i\in\CCS(\Bim{k}(u)\cup\Bim{k+1}(v_{i+1}))$: 
    \begin{enumerate}
        \item since $M,x\models u$ and $(x,y_i)\in\Ri{k}$, then $M,y_i\models \Bim{k}(u)$, hence $\Bim{k}(u)\inc \SF(\Bim{k}(u))\cap y_i$, thus $\Bim{k}(u)\inc v_i$, 2) let $\phi\in\Bim{k+1}(v_{i+1})$, then $\Bi{k+1}\phi\in v_{i+1}$, hence $\Bi{k+1}\phi\in\SF(\Bim{k}(u))\cap y_{i+1}$, thus $\phi\in\SF(\Bim{k}(u))\cap y_{i}$, i.e.\ $\phi\in v_i$; finally, $\Bim{k}(u)\cup\Bim{k+1}(v_{i+1})\inc v_i$; 
        \item $v_i$ is saturated; we only consider the case of the $\wedge$: let $(\phi\wedge\psi)\in s_i$, hence $(\phi\wedge\psi)\in\SF(\Bim{k}(u))\cap y_i$, hence $\phi\in\SF(\Bim{k}(u))\cap y_i$ (as well for $\psi$), hence $\phi\in y_i$ (idem for $\psi$); 
        \item $v_i$ is consistent, since it is a finite subset of the consistent set $y_i$. 
    \end{enumerate}
    1.\ to 3.\ together prove the fact. \\
    Thus, for each $i\geq 0$, both $v_{i+1}$ and $v_i$ are \CCS\ such that there exists $y_{i+1}$ and $y_i$ with $(y_{i+1},y_i)\in\Ri{k+1}$ and $M,y_{i+1}\models v_{i+1}$ and $M,y_i\models v_i$, hence induction hypothesis applies: there exists a $(k+1,\uplim{v_{i+1}},\chi)$-window for $(v_{i+1},v_i)$ with all its $\CCS$ $\Kpi$-satisfiable. Let us denote it $W_i$. Finally, and since by hypothesis $v_0$ is $\Kpi$-satisfiable, $\win{v}{W}{\sizew}$ is the desired $(k,\uplim{u},\chi)$-window for $(u,v_0)$. 
\end{itemize}
\end{proof}

\begin{corollary}\label{existence-of-window}
Let $u$ a \CCS\ containing some formula $\neg \Bi{k}\phi$ and satisfied at a world $x$ of a $\Pi$-dense model $M$ then a) there exists $v_0\in\CCS(\{\neg\phi\}\cup\Bim{k}(u))$ and b) there exists a $(k,\uplim{u},\chi)$-window for $(u,v_0)$ and all its $\CCS$ are $\Kpi$-satisfiable.
\end{corollary}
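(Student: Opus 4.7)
The plan is to first construct the \CCS\ $v_0$ using the semantics at a successor of $x$, and then invoke Lemma~\ref{from-model-to-window} on $(u,v_0)$ to obtain the desired window.

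For part (a), since $u$ is satisfied at $x$ and $\neg\Bi{k}\phi\in u$, we have $M,x\models\neg\Bi{k}\phi$, so there exists $y_0\in S$ with $xR_ky_0$ and $M,y_0\models\neg\phi$. Moreover, for every $\Bi{k}\psi\in u$ we have $M,x\models\Bi{k}\psi$, hence $M,y_0\models\psi$; in other words $M,y_0\models\Bim{k}(u)$. Putting these together, $M,y_0\models\{\neg\phi\}\cup\Bim{k}(u)$. Applying Proposition~\ref{prop-CCS}.\ref{Five} to the set $\{\neg\phi\}\cup\Bim{k}(u)$ at the world $y_0$, the set
\[
v_0 \;=\; \SF(\{\neg\phi\}\cup\Bim{k}(u))\cap\{\psi\colon M,y_0\models\psi\}
\]
is an element of $\CCS(\{\neg\phi\}\cup\Bim{k}(u))$. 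This $v_0$ is obviously $\Kpi$-satisfiable (it holds at $y_0$).

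For part (b), we simply verify the hypotheses of Lemma~\ref{from-model-to-window} for the pair $(u,v_0)$ with the witnesses $(x,y_0)$: both $u$ and $v_0$ are \CCS's, $(x,y_0)\in R_k$, $M,x\models u$ (by assumption), and $M,y_0\models v_0$ (by construction). The lemma then yields a $(k,\chi(u),\chi)$-window $W$ for $(u,v_0)$ such that every \CCS\ occurring in $W$ is $\Kpi$-satisfiable, which is exactly the conclusion of (b).

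There is essentially no obstacle beyond bookkeeping: the only delicate point is justifying that the set defined in part (a) really belongs to $\CCS(\{\neg\phi\}\cup\Bim{k}(u))$, and this is precisely the content of Proposition~\ref{prop-CCS}.\ref{Five}; everything else is a direct application of the previous lemma.
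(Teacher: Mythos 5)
Your proof is correct and follows essentially the same route as the paper: extract a successor $y_0$ of $x$ witnessing $\neg\Bi{k}\phi$ together with $\Bim{k}(u)$, build $v_0$ from the formulas true at $y_0$, and apply Lemma~\ref{from-model-to-window} to $(u,v_0)$ with witnesses $(x,y_0)$. The only (harmless) difference is cosmetic: the paper sets $v_0=\SF(u)\cap y_0$, while you take the intersection with $\SF(\{\neg\phi\}\cup\Bim{k}(u))$, which is if anything a cleaner match to the hypotheses of Proposition~\ref{prop-CCS}.\ref{Five}.
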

\begin{proof}
    Since $M,x\models u$ then $M,x\models \neg\Bi{k}\phi\wedge \Bi{k}(u)$, hence there exists $y_0$: $(x,y_0)\in\Ri{k}$ and $M,y_0\models \neg\phi\wedge \Bim{k}(u)$. Let $v_0=\SF(u)\cap y_0$ and conclude with the above lemma with $n=d(u)$. 
\end{proof}
\section{The algorithm}\label{algorithm}

The idea is that despite the infinity of $\Pi$-dense models, and because of lemma \ref{prolongation-to-infinite-window}, it would suffices to check $(k,\uplim{u},\chi)$-windows. But they are of exponential size, so we need to check them by exploring small (=polynomial) pieces at a time and this will appear to be recursively possible, thanks to continuations. 

We provide intuition in Fig.\ \ref{fig:recursive:windows} with $d(u)=1$ and $\Pi=\{0\cdots 3\}$ (surely $d(u)$ is ridiculously small, but this is for keeping the window small enough). Above is a window, and below a continuation of it (pointwise included by $\sqsubseteq$). As one can ``see'', boxed formulas between $\tilde v_2$ and $\tilde v_1$ cannot influence those between $v_0$ and $v_1$ (since they are of degree 0). Hence, provided the subwindow between the latter two is satisfiable, we can forget it and proceed to try to extend the window. For this, we have to test for $\sqsubseteq$-inclusion. Of course the same reasoning applies at each scale on subwindows. \\

\begin{figure}[!h]
\begin{centering}
  \fbox{
  \begin{tikzpicture}[scale=0.5]
\coordinate (u) at (4,8) ;
\coordinate (v0) at (16,8) ;
\coordinate (v1) at (12,7) ;

\coordinate (v2) at (8,6) ;
\coordinate (v0p) at (12,3) ;
\coordinate (v1p) at (8,2) ;
\coordinate (v2p) at (4,1) ;
\coordinate (v2b) at (4,-1) ;
\coordinate (v1b) at (8,0) ;
\coordinate (arr) at (10,3.5) ;

\coordinate (v02) at (12,6);
\coordinate (v01) at (14,7) ;
\coordinate (v12) at (8,5) ;
\coordinate (v11) at (10,6) ;
\coordinate (v001) at (15,7) ;
\coordinate (v002) at (14,6) ;
\coordinate (v012) at (12,5) ;
\coordinate (v011) at (13,6) ;
\coordinate (v101) at (11,6) ;
\coordinate (v102) at (10,5) ;
\coordinate (v111) at (9,5) ;
\coordinate (v112) at (8,4) ;
\draw (u) node{$\bullet$} node[above]{$u$};
\draw (v0) node{$\bullet$} node[above]{$v_0$};
\draw (v1) node{$\bullet$} node[above]{$v_1$};
\draw (v2) node{$\bullet$} node[above]{$v_2$};
\draw (v01) node{$\bullet$};
\draw (v02) node{$\bullet$};
\draw (v11) node{$\bullet$};
\draw (v12) node{$\bullet$};
\draw (arr) node{\rotatebox[origin=c]{-90}{$\sqsubseteq$}};

\draw (v001) node{$\bullet$};
\draw (v002) node{$\bullet$};
\draw (v011) node{$\bullet$};
\draw (v012) node{$\bullet$};
\draw (v101) node{$\bullet$};
\draw (v102) node{$\bullet$};
\draw (v111) node{$\bullet$};
\draw (v112) node{$\bullet$};
\draw (v0p) node{$\bullet$} node[above]{$\tilde v_0$};
\draw (v1p) node{$\bullet$} node[above]{$\tilde v_1$};
\draw (v2p) node{$\bullet$} node[above]{$\tilde v_2$};

\draw [dashed] (u) -- (v0);
\draw [dashed] (u) -- (v1);
\draw [dashed] (u) -- (v2);
\draw (v2) -- (v1) -- (v0);
\draw (v2) -- (v12);
\draw (v2) -- (v11);
\draw [densely dotted] (v12) -- (v112);
\draw [densely dotted] (v12) -- (v111);
\draw [densely dotted] (v12) -- (v11) -- (v1);
\draw [loosely dotted] (v112) -- (v111) -- (v11);
\draw [densely dotted] (v11) -- (v102);
\draw [densely dotted] (v11) -- (v101);
\draw [loosely dotted] (v102) -- (v101) -- (v1);
\draw (v2) -- (v1) -- (v0);
\draw (v1) -- (v02);
\draw (v1) -- (v01);
\draw [densely dotted] (v02) -- (v01) -- (v0);
\draw [densely dotted] (v01) -- (v002);
\draw [densely dotted] (v01) -- (v001);
\draw [densely dotted] (v02) -- (v012);
\draw [densely dotted] (v02) -- (v011);
\draw [loosely dotted] (v012) -- (v011) -- (v01);
\draw [loosely dotted] (v002) -- (v001) -- (v0);

\draw (v0p) -- (v1p) -- (v1b) -- cycle;
\draw (v1p) -- (v2p) -- (v2b) -- cycle;
\end{tikzpicture}
}
\caption{A $2$-window and one of its potential continuation (arrows are left-to-right or else top-bottom, style is specific to each $i\in\Pi$)\\}\label{fig:recursive:windows}
\end{centering}
\end{figure}
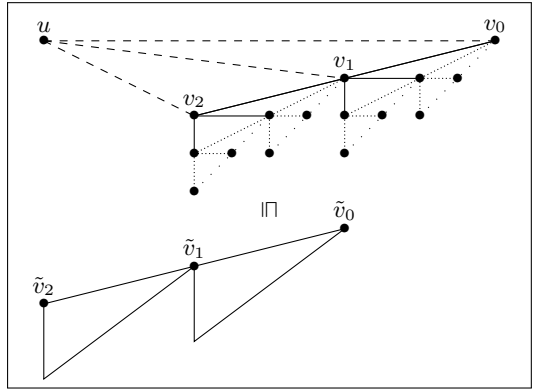
The algorithm we present below is based on the function $\sat$ which answers to the $\Kpi$-satisfiability of its argument. 
Because of Prop. \ref{prop-CCS}.\ref{Five}, the $\Kpi$-satisfiability of a set $s$ of formulas amounts to that of at least one of its $\CCS$, since $s$ is $\Kpi$-satisfiable if and only if there exists a $\Kpi$-satisfiable $u\in\CCS(s)$. Hence, given an initial set of formulas $s$ to be tested, the initial call is $\sat(\chooseCCS(\{s\}))$.\\
In what follows we use built-in functions \algand\ and \all.
The former function lazily implements a logical ``and".
The latter function lazily tests if all members of its list argument are true. \\
\setlength{\textfloatsep}{0pt}
\setlength{\floatsep}{0pt}
\begin{algorithm}[H]
 \floatname{algorithm}{Function}
\begin{algorithmic}
\caption{Test for \Kpi-satisfiability of a \CCS: must be classically consistent and recursively tests each $\Diamond$-formula subsequent window.}
\Function{\sat}{$u$}:
\State {return}
\State {\hspace{0.3cm}$u\neq \{\bot\}$}
\State {\hspace{0.1cm}\algand\ }
\State {\hspace{0.2cm}$\all\{\sat(\chooseCCS(\{\neg\phi\}\cup \Bim{\pi}(u'))\colon \neg\Bi{\pi}\phi\in u'\}$}
\State {\hspace{0.1cm}\algand\ }
\State {\hspace{0.2cm}$\all \{\satW(\chooseW(u,\chooseCCS(\{\neg\phi\}\cup\Bim{k}(u)),k), u,k,\uplim{u})\colon\neg\Bi{k}\phi\in u\}$}
\EndFunction
\end{algorithmic}
\end{algorithm}
\vspace{-2cm}

\begin{algorithm}[H]
\floatname{algorithm}{Function}
\begin{algorithmic}
\caption{Returns $\{\bot\}$ if $s$ is not classically consistent, otherwise returns one classically saturated open branch non-deterministically chosen}
\Function{\chooseCCS}{$s$}
\If  {$\CCS(s)\neq \emptyset$}
\State {return one $u\in \CCS(s)$}
\Else 
\State {return $\{\bot\}$}
\EndIf
\EndFunction
\end{algorithmic}
\end{algorithm}
\vspace{-3cm}

\begin{algorithm}[H]
\floatname{algorithm}{Function}
\begin{algorithmic}
\caption{Non-deterministically chooses a $(k,d(u),d)$-window for $(u,v)$}
\Function{\chooseW}{$u$,$v$,$k$}
\If {there exists a $(k,d(u),d)$-window $W$ for $(u,v)$}
\State {return $W$}
\Else 
\State {return $\langle (\{\bot\})_{i\in[0:d(u)]}, \emptyset\rangle $}
\EndIf
\EndFunction
\end{algorithmic}
\end{algorithm}

\begin{algorithm}[H]
\floatname{algorithm}{Function}
\begin{algorithmic}
\caption{Tests the satisfiability of a $(k,d(u),d)$-window for $(u,v_0)$ and recursively that of each of its subwindows and continuations, until a repetition happens or a contradiction is detected}
\Function{\satW}{$W$,$u$,$k$,$N$}:\algorithmiccomment{$W$ is $\win{v}{W}{d(u)}$
 if $k<\pi$}
\If {$N=0$ \textbf{or} $k=\pi$} \algorithmiccomment{or $\voidpair$ if $k=\pi$}
\State {return \true}   \algorithmiccomment{or $\langle (\{\bot\})_{i\in[0:d(u)]}, \emptyset\rangle$ if $\chooseW$ or $\nextW$ has failed}
\Else 
\State {return}
\State {\hspace{0.83cm} \sat$(v_0)$}
\State {\hspace{0.2cm} $\algand\ \satW(W_0,v_{1},k+1,\uplim{v_1})$}\
\State {\hspace{0.2cm} \algand\ \satW(\nextW$(W,u,k),u,k,N-1))$}
\EndIf
\EndFunction
\end{algorithmic}
\end{algorithm}

\begin{algorithm}[H]
\floatname{algorithm}{Function}
\begin{algorithmic}
\caption{Non-deterministically chooses a $k$-continuation of a window for $s$}
\Function{\nextW}{$W$,$u$}
\If {there exists a $k$-continuation $W_1$ of $W$ for $u$}
\State {return $W_1$}
\Else 
\State {return $\langle (\{\bot\})_{0\leq i\leq d(u)}, \emptyset\rangle $}
\EndIf
\EndFunction
\end{algorithmic}
\end{algorithm}

\section{Analysis of the algorithm}\label{section:complexity:of:Kpi}

Given a \Kpi-model $M=(S,\Ri{a},\Ri{b},v)$ and a set $s$ of formulas, we will write $M,x\models s$ for $\forall \phi\in s\colon M,x\models \phi$. 

\begin{proposition}\label{proposition-on-members}
\mbox{}\\
\vspace{-0.5cm}
\begin{enumerate}
    \item For all $j\in\interf{0}{\uplim{u}}$ we have $\{W\lbr  j\rbr \}\subseteq \{W\}$. 
    \item Given an initial call $\satW(W'\lbr  0\rbr ,u',k',\chi(u'))$, then in all subsequent calls $\satW(W\lbr  j\rbr ,u,k,N)$ the precondition $\{W\}\subseteq \{W'\}$ is satisfied. 
\end{enumerate}
\end{proposition}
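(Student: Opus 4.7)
The plan is to treat the two parts separately, using Part~1 as a lemma for Part~2.

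For Part~1 I would argue by induction on the recursive structure of the window $W$. The base case is $k=\pi$: then $W=\voidpair$ and every partial window of $W$ is also $\voidpair$, so $\{W[j]\}=\emptyset\subseteq\{W\}$. In the inductive case $W=\win{v}{\mathcal W}{\chi(u)}$, the partial window $W[j]=\partialwin{j}{v}{\mathcal W}{j+d(u)}$ retains the top-level $\CCS$s $v_i$ for $i\in\interf{j}{j+d(u)}$ and the subwindows $W_i$ for $i\in\intero{j}{j+d(u)}$; the former is obviously a subset of the top-level $\CCS$s of $W$, and each $W_i$ is already a member of $\{W\}$ itself, so applying the induction hypothesis to each such $W_i$ yields $\{W[j]\}\subseteq\{W\}$.

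For Part~2 I would induct on the depth of recursive calls of $\satW$. The convention underlying the statement is that for each call $\satW(W[j],u,k,N)$ we track an ``ambient'' big window $W$ at level $k$ of which $W[j]$ is the $j$-th polynomial slice, and the claim is that its members are always members of the ambient big $W'$ of the initial call. The base case $W=W'$, $j=0$ is immediate. For the inductive step there are two subcases. The continuation recursion $\satW(\nextW(W[j],u,k),u,k,N-1)$ preserves the ambient big window: by choosing $W'$ once and for all to be a maximal window as produced in the proof of Lemma~\ref{prolongation-to-infinite-window}, the continuation returned by $\nextW$ can be read as the next slice $W[j+1]$ of the same $W$, so $\{W\}\subseteq\{W'\}$ trivially persists. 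The subwindow recursion $\satW(W_0,v_{1},k+1,\chi(v_{1}))$ transfers to a new ambient big window $\tilde W$ at level $k+1$; here $W_0\in\{W[j]\}$ by definition of members, hence $\{W_0\}\subseteq\{W[j]\}\subseteq\{W\}\subseteq\{W'\}$ by Part~1 and the inductive hypothesis, and the maximal expansion $\tilde W$ of $W_0$ is taken inside the maximal $W'$, so $\{\tilde W\}\subseteq\{W'\}$ as well.

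The main obstacle is conceptual rather than computational: one must commit, at the outset, to a single maximally expanded window $W'$ and interpret each subsequent call's ambient big window as a sub-component of $W'$ at the appropriate level. Once this convention is fixed, closure of maximal windows under taking subwindows and continuations (Lemma~\ref{prolongation-to-infinite-window}) makes the induction routine, and we obtain the key property used in the complexity analysis: every $\CCS$ encountered at any recursion depth is already one of the finitely many $\CCS$s of the fixed $W'$.
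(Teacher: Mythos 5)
Your proof is correct and follows essentially the same route as the paper: Part~1 by induction on the recursive structure of the window (void base case, then top-level $\CCS$s of the slice among those of $W$ plus the induction hypothesis for the sliced subwindows), and Part~2 by induction on the recursive calls, split into the continuation call (same ambient $W$, concluded by Part~1) and the subwindow call (concluded from $\{W_0\}\subseteq\{W[j]\}\subseteq\{W\}\subseteq\{W'\}$). Two cosmetic remarks: the slice $W\lbr j\rbr$ is the recursively shrunk window $\shrinkwin{j}{v}{W}{j+d(u)}$, whose subwindow entries are the slices $W_i\lbr 0\rbr$ rather than the full $W_i$ as you wrote --- which is precisely why your appeal to the induction hypothesis in Part~1 is needed at all --- and your detour through maximal $(k,\infty,d)$-windows of Lemma~\ref{prolongation-to-infinite-window} is superfluous, since the paper simply takes the ambient window of the subwindow call to be the subwindow $W_j$ of the current ambient $W$, whose members lie in $\{W\}$ by the very definition of $\{\cdot\}$.
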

\begin{proof} 
\begin{enumerate}
    \item
    If $W=\voidpair$, then we are done since $\{\voidpair\lbr j\rbr \}=\emptyset\subseteq\{W\}$\\
    else, suppose $W\lbr j\rbr =\shrinkwin{j}{v}{W}{j+d(u)}$; then $\{W\lbr j\rbr \}=\{v_i\colon i\in\interf{j}{j+d(u)}\cup\bigcup_{i\in[j:j+d(u)[}\{W_i\lbr 0\rbr \}$. Since, a) $\{v_i\colon i\in\interf{j}{j+d(u)}\}\subseteq \{v_i\colon i\in\interf{0}{\uplim{u}}\}\subseteq \{W\}$ and b) by IH $\{W_i\lbr 0\rbr \}\subseteq \{W_i\}$ and $\{W_i\}\subseteq \{W\}$. \\
    \item Initially, it is true for $\satW(W'\lbr 0\rbr )$ by 1) above. It remains true for the subsequent calls $\satW(W_j\lbr 0\rbr ,\cdots)$ since by 1)  $\{W_j\lbr 0\rbr \}\subseteq \{W_j\}$ and since $\{W_j\}\subseteq \{W\}$, we have by IH $\{W\}\subseteq \{W'\}$. It also remains true for the calls $\satW(W\lbr j+1\rbr ,\cdots)$ since as a partial window $\{W\lbr j+1\rbr \}\subseteq \{W\}$ and we conclude again by IH. 
    \end{enumerate}
\end{proof}

\begin{lemma}[Soundness]\label{soundness}\\
If $u'$ is a \Kpi-satisfiable \CCS\ then the call $\sat(u')$ returns \true. 
\end{lemma}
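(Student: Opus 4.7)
Since $\sat$ is non-deterministic, ``returns \true'' is to be read as ``admits an execution returning \true''. The plan is to proceed by induction on the modal depth $d(u')$. Fix a $\Pi$-dense model $M=(S,(R_i)_{i\in\Pi},V)$ and a world $x\in S$ witnessing the $\Kpi$-satisfiability of $u'$. Clearly $u'\neq\{\bot\}$, so the first conjunct in the body of $\sat(u')$ evaluates to \true; the rest of the argument treats the two universal tests associated with the modalities $\Bi{\pi}$ and $\Bi{k}$ for $k\in\Pi^-$.

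For each $\neg\Bi{\pi}\phi\in u'$, the world $x$ has an $R_\pi$-successor $y$ with $M,y\models\{\neg\phi\}\cup\Bim{\pi}(u')$, so by Proposition~\ref{prop-CCS}.\ref{Five} the set $\SF(\{\neg\phi\}\cup\Bim{\pi}(u'))\cap\{\psi\colon M,y\models\psi\}$ is a $\Kpi$-satisfiable \CCS\ that $\chooseCCS$ can return non-deterministically; since its modal depth is at most $d(u')-1$, the outer induction hypothesis yields the conclusion. For each $\neg\Bi{k}\phi\in u'$ with $k<\pi$, Corollary~\ref{existence-of-window} furnishes a $v_0\in\CCS(\{\neg\phi\}\cup\Bim{k}(u'))$ together with a $(k,\chi(u'),\chi)$-window $W^{*}$ for $(u',v_0)$ whose \CCS s are all $\Kpi$-satisfiable. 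I would guide the non-determinism so that $\chooseCCS$ returns this $v_0$ and $\chooseW$ returns the initial $(k,d(u'),d)$-partial window $W^{*}[0]$, and then reduce to an auxiliary invariant on $\satW$.

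The auxiliary invariant reads: if $W^{*}$ is a $(k,\chi(u),\chi)$-window for $(u,v_0)$ all of whose \CCS s are $\Kpi$-satisfiable, then for every admissible offset $j$ and every $N$ not exceeding the number of continuations remaining in $W^{*}$, the call $\satW(W^{*}[j],u,k,N)$ returns \true. The proof is by secondary induction with lexicographic measure $(d(u),N)$. The base cases $N=0$ and $k=\pi$ are immediate. In the inductive case, $\sat(v_j)$ is discharged by the outer induction on $d$ since $v_j$ is satisfiable and $d(v_j)<d(u)$; the subwindow call $\satW(W^{*}_{j}[0],v_{j+1},k+1,\chi(v_{j+1}))$ is also handled by the outer induction, because the $j$-th subwindow of $W^{*}$ is itself a $(k+1,\chi(v_{j+1}),\chi)$-window for $(v_{j+1},v_j)$ with $\Kpi$-satisfiable \CCS s and $d(v_{j+1})<d(u)$; and the continuation call invokes the inner induction on $N$ after observing, via reflexivity of $\sqsubseteq^{k}$, that the shifted partial window $W^{*}[j+1]$ is a genuine $k$-continuation of $W^{*}[j]$ that $\nextW$ may return non-deterministically.

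The step I expect to be the main obstacle is reconciling the exponential-size object $W^{*}$ supplied by Corollary~\ref{existence-of-window} with the polynomial-size partial windows actually manipulated by the algorithm: one must verify that up to $\chi(u')$ successive partial windows can be extracted as continuations of one another along $W^{*}$ so that the counter $N$ never runs out before all positions required to witness satisfiability have been visited, and that the pointwise $k$-inclusion constraints defining continuations are respected at every level of the recursive structure---both of which hold but require careful unravelling of the indexing conventions of Section~\ref{windows}.
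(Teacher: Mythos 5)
Your proof follows essentially the same route as the paper's: induction on $d(u')$, Corollary~\ref{existence-of-window} supplying the big satisfiable window, resolution of the non-determinism by letting $\chooseCCS$, $\chooseW$ and $\nextW$ walk along its successive partial windows (the continuation property holding by reflexivity of $\sqsubseteq^k$, and the satisfiability of the members being exactly Proposition~\ref{proposition-on-members}), and a nested induction for $\satW$ in which your lexicographic measure $(d(u),N)$ plays the role of the paper's $(\pi-k,N)$ and works equally well. The indexing worry you flag at the end---the counter starting at $\uplim{u}$ while a window of length $\uplim{u}$ affords only $\uplim{u}-d(u)$ shifts---is real but harmless (the paper glosses over it too), since the construction of Lemma~\ref{from-model-to-window} produces satisfiable windows of arbitrary length.
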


\begin{proof} 
Since $u'$ is \Kpi-satisfiable, then $u'\neq \{\bot\}$. Hence the result of $\sat(u')$ rely on that of: 
\begin{itemize}
    \item [] $\all \{\sat(\chooseCCS(\{\neg\phi\}\cup \Bim{\pi}(u'))\colon \neg\Bi{\pi}\phi\in u'\}$
    \item [] $\algand $
    \item [] $\all \{\satW(\chooseW(u',\chooseCCS(\{\neg\phi\}\cup\Bim{k}(u')),k'), u',k',\uplim{u'})\colon$ 
    \item []\hspace{2cm} $k'\in\Pi^-\mbox{ and }\neg\Bi{k}\phi\in u'\}$
\end{itemize}
We proceed by induction on $d(u')$\\
1) Case $d(u')=0$: then the sets\\
    $\{\sat(\chooseCCS(\{\neg\phi\}\cup \Bim{\pi}(u'))\colon \neg\Bi{\pi}\phi\in u'\}$ and \\
    $\{\satW(\chooseW(u',\chooseCCS(\{\neg\phi\}\cup\Bim{k'}(u')),k'), u',k',\uplim{u'})\colon\neg\Bi{k}\phi\in u'\}$\\
    are empty. Hence $\sat(u')$ returns \true.\\
2) Case $d(u')\geq 1$. The induction hypothesis is IH$_1$: if $u$ is $\Kpi$-satisfiable and $d(u)<d(u')$ then $\satW(u)$ returns \true. Now, for each $\neg\Bi{k'}\phi\in u'$:\\
    2.1) if $k=\pi$ then by Corollary \ref{existence-of-window}, there exists $v'_0\in\CCS(\{\neg\phi\}\cup\Bim{\pi}(u'))$ and there exists $W'$ a $(\pi,\chi(u'),\chi)$-window for $(u',v'_0)$ with all its $\CCS$ $\Kpi$-satisfiable, namely $W'=\voidpair$.
    Thus by IH$_1$ (since $d(v'_0)<d(u')$), there exists $v'_0 \in \CCS(\{\neg\phi\}\cup \Bim{\pi}(u'))$ such that $\sat(v'_0)$ returns \true.\\
    Hence $\{\sat(\chooseCCS(\{\neg\phi\}\cup \Bim{\pi}(u'))\colon \neg\Bi{\pi}\phi\in w\}$ returns \true. \\
    2.2) if $k\in\Pi^-$. By Corollary \ref{existence-of-window}, there exists $v'_0\in\CCS(\{\neg\phi\}\cup\Bim{k'}(u'))$ and there exists $W'$ a $(k',\chi(u'),\chi)$-window for $(u',v'_0)$ with all its $\CCS$ $\Kpi$-satisfiable. We set:
    \begin{itemize}
        \item $\chooseCCS(\{\neg\phi\}\cup\Bim{k}(u'))=v_0$
        \item $\chooseW(u',v'_0,k)=W'\lbr 0\rbr $
        \item and for each subsequent call $\satW(W\lbr 0\rbr ,u,k,\uplim{u})$ \\
        let $\nextW(W\lbr j\rbr ,u,k)=W\lbr j+1\rbr $ for $j\in\intero{0}{\chi(u)}$ \hfill(its $k$-continuation)
    \end{itemize}
Given that the initial call $\satW(W'\lbr 0\rbr ,u',k',\chi(u'))$, and that all \CCS\ of $W'$ are $\Kpi$-satisfiable, then all subsequent calls $\satW(W\lbr j\rbr ,u,k,N)$ return \true. This can be proved by the following nested induction on $(\pi-k,N)$:\\
    \begin{itemize}
        \item if $k=\pi$ (and $W=\voidpair$) or $N=0$ it is true since $\satW(W,\cdots)=\true$. 
        \item else ($k<\pi$ and $N>0$) with IH$_2$: $\satW(W\lbr j\rbr ,u,k,N)$ return \true; \\
        then: 
    \end{itemize}
        \begin{equation*}
        \begin{array}{lll}
            \satW(W\lbr j\rbr ,u,k,N) &=&  \sat(v_0) \\ 
                              & &  \algand\,\,\satW(W_j\lbr 0\rbr ,v_{j+1},k+1,N)\\
                              & & \algand\,\,\satW(W\lbr j+1\rbr ,u,k,N-1)\\
                              &=&  \true \mbox{ (since by Prop. \ref{proposition-on-members}, $v_0\in \{W'\}$ and, as such,}\\
                              & &  \mbox{is satisfiable, thus $\sat(v_0)$ returns $\true$ by IH$_1$)}\\ 
                              & &  \algand\,\,\true \mbox{(by IH$_2$ since $(\pi-(k+1)+N<\pi-k+N$)}\\
                              & & \algand\,\,\true \mbox{(by IH$_2$ since $\pi-k+N-1<\pi-k+N$)} 
        \end{array} 
        \end{equation*}
        In particular, $\satW(W'\lbr 0\rbr ,u',k',\uplim{u'})$ returns \true, and so does \\
     $\satW(\chooseW(u',\chooseCCS(\{\neg\phi\}\cup\Bim{k'}(u')),k'), u',k',\uplim{u'})$. \\ Consequently, $\sat(u')$ returns \true\ too.
  \end{proof}

For proving the completeness of this algorithm, we need to transform a \true\ into a model. To this aim we define the notion of satisfiability of a window. 
\paragraph{Satisfiability of window}
Let $M=(S,(\Ri{i})_{i\in\Pi},V)$ be a $\Pi$-dense model and $x\in S$. Let $W$ be a $(k,n,\lambda)$-window for $(u,v_0)$. We say that $M$ satisfies $W$ at $x$, denoted by $M,x\models W$ iff:
\begin{itemize}
    \item $W=\voidpair$
    \item or, if $W=\win{v}{W}{n}$
    \begin{itemize}
        \item $M,x\models u$
        \item $\exists y_0\in\Ri{k}(x)\colon M,y_0\models v_0$
        \item if $i\in\intero{0}{n}$: 
        $\exists y_{i+1}\in\Ri{k+1}^-(y_{i})\cap \Ri{k}(u)\colon M,y_{i+1}\models v_{i+1}$ and $M,y_{i+1}\models W_{i}$
    \end{itemize}
\end{itemize}

\begin{lemma}[Completeness]\label{completeness}\\
Given a \CCS\ $u$ and a $(k,d(u),d)$-window for $(u,v_0)$, then:
\begin{itemize}
    \item [$\bullet$] if $\sat(u)$ returns \true\ then $u$ is $\Kpi$-satisfiable
    \item [$\bullet$] if $d(u)\neq 0$ and $\satW(W,u,k,N)$ returns \true\ then $W$ is \Kpi-satisfiable
\end{itemize}
\end{lemma}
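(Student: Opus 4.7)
The proof will proceed by simultaneous induction: an outer induction on $d(u)$ for the $\sat$ claim, together with an inner induction on $(\pi-k,N)$ in the lexicographic order for the $\satW$ claim at fixed $d(u)$. The base case $d(u)=0$ is immediate: a CCS without modal subformulas is classically consistent and saturated, so true at a singleton model with a suitable valuation; the $\satW$ claim is then vacuous since it assumes $d(u)\neq 0$.

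For $\sat(u)$ with $d(u)\geq 1$, the returned \true\ forces (a) $u\neq\{\bot\}$, (b) for each $\neg\Bi{\pi}\phi\in u$ some $v\in\CCS(\{\neg\phi\}\cup\Bim{\pi}(u))$ with $\sat(v)=\true$, and (c) for each $\neg\Bi{k}\phi\in u$ with $k\in\Pi^-$ some window $W$ for which $\satW(W,u,k,\uplim{u})=\true$. I will invoke the outer IH on the subcalls in (b) to obtain $R_{\pi}$-successor pointed models, and the $\satW$ part of the present lemma on (c) to obtain pointed models satisfying each such window. The final model will be built by taking a fresh root $x$ whose valuation matches the propositional part of $u$, attaching the $\pi$-successors directly, and gluing the $\satW$-models at their base worlds. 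Since disjoint unions of $\Pi$-dense frames are $\Pi$-dense (Section~\ref{pi-density}), the result witnesses $\Kpi$-satisfiability of $u$ at $x$.

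For $\satW(W,u,k,N)$ with $d(u)\geq 1$, writing $W=\win{v}{W}{d(u)}$, I will unfold the recursion from the initial value $N=\uplim{u}$, exhibiting for $j=0,\ldots,N-1$ successful subcalls $\sat(v_0^{(j)})$, $\satW(W_0^{(j)},v_1^{(j)},k+1,\uplim{v_1^{(j)}})$, and a $k$-continuation $W^{(j+1)}$ of $W^{(j)}$. The outer IH (since $d(v_0^{(j)})<d(u)$) delivers $\Kpi$-satisfiability of each $v_0^{(j)}$, and the inner IH (since $\pi-(k+1)<\pi-k$) delivers $\Kpi$-satisfiability of each subwindow $W_0^{(j)}$. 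To extract the infinite chain needed for $\Pi$-density, I will appeal to Lemma~\ref{prolongation-to-infinite-window}: since $\uplim{u}$ exceeds the number of distinct $(k,d(u),d)$-windows over $\SF(u)$, two of the $W^{(j)}$ must coincide, and splicing yields an infinite sequence of continuations, from which an infinite chain $(y_j)_{j\geq 0}$ of $R_k$-successors of $x$ with $y_{j+1}\Ri{k+1}y_j$ can be read off; at each $y_j$ the corresponding models for $v_0^{(j)}$ and for $W_0^{(j)}$ are glued.

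The hard part will be the bookkeeping for this gluing: I must check that the assembled frame is globally $\Pi$-dense and that the models glued at shared worlds are compatible. Density at level $k$ from $x$ is provided by the chain $(y_j)_{j\geq 0}$, while density at deeper levels is delegated to the $\Pi$-dense submodels supplied by the IHs. The coherence of the spliced infinite window, in particular the transfer of the $\Bim{k}(u)$-constraints across the splicing point and the preservation of subwindow structure between $W^{(j)}$ and $W^{(j+1)}$, will rely precisely on the pointwise $k$-inclusion that accompanies the continuation relation, i.e.\ on Lemma~\ref{k2kplusone-window} combined with Lemma~\ref{prolongation-to-infinite-window}.
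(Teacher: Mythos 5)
Your plan is correct and follows essentially the same route as the paper: unfold the successful $\nextW$ calls into a chain of $k$-continuations, concatenate them via Lemma~\ref{k2kplusone-window}, pump to a $(k,\infty,d)$-window via the repetition argument of Lemma~\ref{prolongation-to-infinite-window}, obtain models of the members $v_i$ and subwindows $W_i$ by the induction hypothesis, and glue them by disjoint union under a fresh root with the width-chain $y_{j+1}\Ri{k+1}y_j$ providing $\Pi$-density, checking the truth lemma via $\Bim{k+1}(v_{i+1})\subseteq v_i$. The only (cosmetic) divergence is your inner induction on $(\pi-k,N)$: the paper needs no such inner measure in this proof, since every subwindow $W_i$ is a window for $(v_{i+1},v_i)$ with $d(v_{i+1})<d(u)$, so the single outer induction on $d(u)$ already covers the subwindow calls.
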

\begin{proof}
    We construct $M=(S,(\Ri{i})_{i\in\Pi},V)$ by induction on $d(u)$. For each $k\in\Pi$ let $\neg\Bi{k}\phi^k_0,\cdots,\neg\Bi{k}\phi^k_{n_k}$ be the $\Diamond$-formulas of $u$. In what follows we define $V_x$ by: $V_x(p)=\{x\}$ if $p\in x$ and else $V_x(p)=\emptyset$, for all $p\in\At$. \\
    If $d(u)=0$ the model $M=(\{u\}, \voidsequence,V_u)$ $u$. Else, 
    \begin{itemize}
        \item Firstly, for each $\neg\Bi{\pi}\phi_l^{\pi}\in u$ suppose that
        $\sat(\chooseCCS(\{\neg\phi_l^{\pi}\}\cup \Bim{\pi}(u)))$ returns \true. By induction hypothesis, $\{\neg\phi^{\pi}_l\}\cup \Bim{\pi}(u)$ is true in some \Kpi-model: $(S^{\pi,l},(R_i^{\pi,l})_{i\in\Pi},V^{\pi,l}),v^{\pi,l}\models \{\neg\phi^{\pi}_l\}\cup \Bim{\pi}(u)$
        \item Secondly, for each $k\in\Pi^-$ and for each $\neg\Bim{k}\phi^k_l\in s$ suppose that the call $\satW(\chooseW(u,\chooseCCS(\{\neg\phi_l\}\cup\Bim{k}(u)),k), u,k,\uplim{u})$ returns \true. Let $v_0$ be the \CCS\ chosen by $\chooseCCS$, let $W^0=\win{v^0}{W}{u}$ the $(k,d(u),d)$-window for $(u,v_0)$ chosen by $\chooseW$, and for $i\in\interf{1}{\uplim{u}}$, let $W^{j}=\win{v^j}{W^j}{u}$ be the $(k,d(u),d)$-windows for $(u,v_i)$ chosen by the successive recursive calls to $\nextW$ (which succeed by hypothesis). Since each $W^{j+1}$ is a $k$-continuation of $W^j$, by repeated application of lemma \ref{k2kplusone-window}, with $(v_i,W_i)=(v^i_0,W^i_0)$ we obtain the following $(k,\uplim{u},d)$-window for $(u,v_0)$: $\win{v}{W}{\uplim{u}}$. Now, by applying lemma \ref{prolongation-to-infinite-window}, we can extend it to a $(k,\infty,d)$-window for $(u,v_0)$: 
        $W=\infwin{v}{W}$ where, beyond $\uplim{u}$, all $v_i$ and $W_i$ are copies of a $v_j$ and a $W_j$ with $j\leq \uplim{u}$. Since by hypothesis, for $i\geq 0$ all calls $\sat(v_i)$ and $\satW(W_i,v_{i+1},k+1,\chi(v_{i+1}))$ returns \true, then by induction hypothesis $v_i$ and $W_i$ are $\Kpi$-satisfiable, let $M^0,x^0\models v_0$ and $M^{i+1},x^{i+1}\models W_i$ (this implies $M^{i+1},x^{i+1}\models v_{i+1}$ by definition).  \\
        In fact since they all depend on the formula $\neg\Bi{k}\phi^k_l$ involved, we add $k,l$ in the superscript giving: $W^{k,l}$, $v^{k,l}_i$, $W^{k,l}_i$, instead of just $W$, $v_i$ and $W_i$ and we write $M^{k,l,i},y^{k,l,i}\models W^{k,l}_{i-1}$, and $M^{k,l,i},y^{k,l,i}\models v_i^{k,l}$,  with $M^{k,l,i}=(S^{k,l,i},(R^{k,l,i}_j)_{j\in\Pi},V^{k,l,i})$. We merge these models into one, for each $\neg\Bi{k}\phi^k_l$ formula of $u$: $M^{k,l}=\bigsqcup_{i\geq 0} (M^{k,l,i})$. \\
        Putting all things together, we define:
        \[M'=(S',(\Ri{k}')_{k\in\Pi},V')=\bigsqcup_{k\in\Pi,l\in[1:n_k],\neg\Bi{k}\phi^k_l} M^{k,l}
        \]
        \end{itemize}
        $M'$ is a $\Pi$-dense model since it is the disjoint union of $\Pi$-dense models. It remains to connect it with $u$ seen as a possible world to form the final model $M=(S,(\Ri{k})_{k\in\Pi},V)$: 
        \begin{itemize}
            \item $S=S'\sqcup \{u\}$
            \item for each $k\in\Pi$: $\Ri{k}''=\Ri{k}'\bigsqcup_{l\in[0:n_k],\neg\Bi{k}\phi^k_l,i\geq 0}\{(u,y^{k,l,i})\}$
            \item for each $k\in\Pi^-$: $\Ri{k+1}=\Ri{k+1}''\sqcup\bigsqcup_{l\in[0:n_k],\neg\Bi{k}\phi^k_l,i\geq 0}\{(y^{k+1,l,i+1},y^{k+1,l,i})\}$
            \item for each $p\in\At$: $V(p)=V'(p)\sqcup V_{u}(p)$
            \end{itemize}
        Now, it is time to check that 1. $M$ is $\Pi$-dense, and (truth lemma) both 2. $M,u\models u$ and 3. $M,u\models  W^{k,l}$ are true:
    \begin{enumerate}
        \item Let $(x,y)\in\Ri{k}$ for some $k\in\Pi^-$: 
        \begin{itemize}
            \item if $(x,y)\in\Ri{k}''$, i.e.\ $(x,y)=(u,y^{k,l,i})$ then since $(u,y^{k,l,i+1})\in\Ri{k}$ and $(y^{k,l,i+1},y^{k,l,i})\in\Ri{k+1}$, we are done;
            \item if $(x,y)\in\Ri{k}'$ and $\not\in\Ri{k}''$, i.e.\ $(x,y)=(y^{k+1,l,i+1},y^{k+1,l,i})$, then:
            \begin{itemize}
                \item if $k=\pi-1$ then we are done;
                \item else since $W_i^{k,l}$ is $\Kpi$-satisfiable, $\exists z\colon (y^{k+1,l,i+1},z)\in\Ri{k+2}$ and $(y^{k+1,l,i},z)\in\Ri{k+1}$.
            \end{itemize}
        \end{itemize}
        \item We only treat the case of modal formulas. \\
        $\Diamond$-formulas: Let $\neg\Bi{k}\neg\phi^k_l\in u$, since $\neg\phi^k_l\in v^{k,l}_0$ and $M,y^{k,l,0}\models v^{k,l}_0$, and  $(u,y^{k,l,0})\in\Ri{k}$, we are done;\\
        $\Box$-formulas: Let $\Bi{k}\neg\phi^k_l\in u$ and let $(u,y^{k,l,i})\in\Ri{k}$ for some $i$, $\neg\phi^k_l\in\Bim{k}(u)$ hence $\neg\phi^k_l\in v_i^{k,l}$ which is a member of $W^{k,l}$, the $(k,\infty,\infty)$-window for $(u,v_0^{k,l})$ defined above. But since we added edges between worlds $y_i^{k,l}$ of distinct models, we must check that still $M,y_i^{k,l}\models v_i^{k,l}$ since $v_i^{k,l}$ may contain a $\Box$-formula that would be unsatisfied in $M$. We do so by a short induction on $i$; this is true for $M,y_{0}^{k,l}\models v_0^{k,l}$ (since no edge from $v_0^{k,l}$ were added); by definition of windows, we have that $\Bim{k+1} v_{i+1}^{k,l}\subseteq v_i^{k,l}$ and since $M,y_i^{k,l}\models v_i^{k,l}$ (by IH on $i$) hence $M,y_{i+1}^{k,l}\models v_{i+1}^{k,l}$ and we are done again. 
        \item 
        \begin{itemize}
            \item [a)] $M,u\models u$ 
        \item [b)] $ y_0^{k,l}\in\Ri{k}(u)\colon M,y_0^{k,l}\models v_0^{k,l}$, 
        \item [c)] let $i\geq 0\colon y_{i+1}^{k,l}\in\Ri{k+1}^-(y_i^{k,l})\cap \Ri{k}(u)$; as seen just above $M,y_{i+1}^{k,l}\models v_{i+1}^{k,l}$ and, last, $M,y_{i+1}^{k,l}\models W_i^{k,l}$ (main induction hypothesis).  
        \end{itemize}
    \end{enumerate}
    All in all, $M,u\models u$ and  for all $\neg\Bi{k}\phi^k_l\in u$ $M,u\models W^{k,l}$
\end{proof}

We come now to complexity. Let $\spc$ be the function evaluating the amount of memory needed by a call to one of our functions. 
\begin{lemma}\label{polyspace}
$\sat(u)$ runs in polynomial space w.r.t.\ $\lgu$, i.e.\ $\spc(\sat(u))=R(\lgu)$ for some polynomial $R$ of constant degree.    
\end{lemma}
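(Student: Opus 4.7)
The plan is to bound $\spc(\sat(u))$ by the product of (i) the space occupied by a single stack frame and (ii) the maximum depth of the recursion stack, and to show that both quantities are polynomial in $\lgu$. Because the three conjuncts of $\satW$ are evaluated sequentially under the lazy semantics of $\algand$, memory can be reused between sibling subcalls, so the total space used at any instant is bounded by the sum over active frames on the stack.

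For (i), the key observation is that every window ever manipulated by the algorithm is a $(k,d(u),d)$-window (as produced by $\chooseW$ and $\nextW$), and not an exponentially large $(k,\chi(u),d)$-window. Unfolding the recursive definition, such a window for $(u,v_0)$ carries at most $d(u)+1$ top-level \CCS, each of size bounded by $c_{\mathrm{sf}}\cdot\lgu$, together with at most $d(u)$ subwindows at level $k+1$, each being a $(k+1,d(v_i),d)$-window with $d(v_i)<d(u)$. Since $k$ ranges in the fixed set $\Pi$ of cardinality $\pi+1$, an induction on $\pi-k$ shows that the total size of such a window is polynomial in $\lgu$, of degree at most $\pi+1$. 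The counter $N$ in $\satW$ may be as large as $\chi(u)=2^{P(\lgu)}$, but when stored in binary it occupies only $P(\lgu)$ bits; the other scalar arguments obviously fit in polynomial space.

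For (ii), I would argue that the genuinely-pushing recursive calls strictly decrease a well-founded quantity. The continuation call $\satW(\nextW(W,u,k),u,k,N-1)$ is a tail call: by the lazy semantics of $\algand$, once the first two conjuncts have returned \true, the current frame can be reused by overwriting $W$ with $\nextW(W,u,k)$ and decrementing $N$, so no new frame is actually pushed. This point is essential --- if the continuation really pushed a frame, the $N$-recursion, with $N$ exponentially large, would require exponential stack. The only genuinely-pushing $\satW$-call is the subwindow call $\satW(W_0,v_1,k+1,\chi(v_1))$, which strictly increases $k$, so the number of $\satW$-frames stacked between two successive $\sat$-calls is at most $\pi+1$. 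Between $\sat$-calls the modal depth of the argument strictly decreases, giving at most $d(u)+1\leq\lgu+1$ nested $\sat$-frames. Interleaving these bounds yields a total stack depth polynomial in $\lgu$.

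The main obstacle is to justify the tail-call treatment of the continuation rigorously: one must verify that, after the first two conjuncts have returned, nothing in the current frame besides $W$, $u$, $k$, and $N$ is needed to evaluate the third conjunct, and that the update $W\leftarrow\nextW(W,u,k)$ can indeed be performed in place. This follows because $u$ and $k$ are unchanged across continuation calls and the new window is fully determined by the previous one together with the nondeterministic guess made by $\nextW$, which can itself be performed in polynomial working space. Combining (i) and (ii) yields $\spc(\sat(u))\leq A\cdot B$ with both $A$ and $B$ polynomial in $\lgu$, so $\spc(\sat(u))=R(\lgu)$ for some polynomial $R$ whose degree depends only on the fixed parameter $\pi$.
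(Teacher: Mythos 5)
Your proof is correct and follows essentially the same route as the paper's: both rest on lazy evaluation (space is a maximum over sequentially evaluated siblings), on treating the continuation call $\satW(\nextW(W,u,k),u,k,N-1)$ as a space-free tail call (the paper says the frame is ``forgotten'' before continuing, which is exactly your in-place reuse), on the chain of subwindow calls being bounded by $\pi$ in $k$ and the $\sat$-nesting by $d(u)\leq\lgu$, and on the polynomial window-size bound computed in Lemma~\ref{prolongation-to-infinite-window}. The only difference is presentational: the paper unrolls a recurrence with maximizers $v^+$ and $W^+$ where you use a frame-size-times-stack-depth accounting, yielding the same polynomial-in-$\lgu$ bound with degree depending only on $\pi$.
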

\begin{proof}
    First, we recall that functions \all\ and \algand\ are lazily evaluated. \\
    Obviously, \chooseCCS\ runs in polynomial space. 
    On another hand, as seen in lemma \ref{prolongation-to-infinite-window}, the size of each $(k,d(u),d)$-window for $u$ is bounded by $P(\lgu)$. Thus the functions \chooseW\ and \nextW\ run in polynomial space. It is also clear that functions $\sat$ and $\satW$ terminate since their recursion depth is bounded (respectively by $\lgu$ and by $\lgN$ the size of $N$) as well as their recursion width.
    Among all of these calls, let $\wpp$ be the argument with modal depth $<d(u)$ for which $\sat$ has the maximum cost in terms of space, i.e.\ such that $\spc(\sat(\wpp))$ is maximal. As well, among subwindows of $W$ (which are $(k,d(v),d)$-windows with $d(v)<d(u)$), let $W^+$ be the $(k,d(v^+),d)$-window, with $d(v^+)<d(u)$, for which the space used by $\satW(W^+,v^+,k,\uplim{v^+})$ is maximal. \\
    Let $W^0=\win{v}{W^0}{d(u)}$ be the $(k,d(u),d)$-window for $(u,v_0)$ chosen by $\chooseW$, and for $i\in\interf{1}{\uplim{u}}$, let $W^{j}$ be the $(k,d(u),d)$-windows for $(u,v_i)$ chosen by the successive recursive calls to $\nextW$. 
    Let us firstly evaluate the cost of $space(\satW(W^0,u,k,N))$. ). The function $\satW$  keeps its arguments in memory during the calls $\sat(s_0)$ and $\satW(W^0,u,k,N)$, then forget them and continue with $\satW(W^1,u,k,N-1)$. Let $\tau=\lgx{W^0}+\lgu+\lgx{k}+\lgx{\uplim{u}}$, we have $\tau\leq 4.\lgx{W^+}$. Now, let us proceed: \\
    
    $\spc(\satW(W^0,u,k,\uplim{u}))$
    \begin{longtable}{lll}
    $\leq$& $\max \{$&$\tau+\spc(\sat(v_0)),$\\
    &&$\tau+\spc(\satW(W_i^0,v_i^0,k+1,\uplim{v_i^0})),$\\
    &&$space(\satW(W^1,u,k,\uplim{u}-1))\}$\\
    $\leq$&$ \max \{$&$\tau+\spc(\sat(v^+)),$\\
    & & $\tau+\spc(\satW(W^+,v^+,k+1,\uplim{v^+})),$\\
    & & $space(\satW(W^1,u,k,\uplim{u}-1))\}$\\
    $\leq$ & $ \max \{$ & $\tau+\spc(\sat(v^+)),$\\
    & & $\tau+\spc(\satW(W^+,v^+,k+1,\uplim{v^+})),$\\
    & &$\max\{\tau+\spc(\sat(v^+)),$\\
    & & \hspace{0.8cm}$\tau+\spc(\satW(W^+,v^+,k+1,\uplim{v^+})),$\\ 
    & & $\hspace{0.8cm}\space(\satW(W^2,u,k,\uplim{u}-2))\}\}$\\
    $\leq$ & $\max \{ $ & $\tau+\spc(\sat(v^+)),$\\
    & & $\tau+\spc(\satW(W^+,v^+,k+1,\uplim{v^+})),$\\
    & & $\spc(\satW(W^2,u,k,\uplim{u}-2))\}$\\
    $\vdots$ & & \\
    $\leq $ &  $ \max \{ $ & $ \tau+\spc(\sat(v^+)) ,$\\
    & & $ \tau+\spc(\satW(W^+,v^+,k+1,\uplim{v^+})), $\\
    & & $ \spc(\satW(W^{\uplim{u}},u,k,0))\} $ \\
     $ \leq $ & & $ \tau+\max \{\spc(\sat(v^+)),$\\
    & & $ \hspace{1.5cm}\spc(\satW(W^+,v^+,k+1,\uplim{v^+}))\}$\\
      $ \leq $ & & $ \tau+\max \{\spc(\sat(v^+)),$\\
    & &  \hspace{1.5cm} $\tau+\max \{\spc(\sat(v^+)),$\\
    & & \hspace{3cm} $\spc(\satW(W^+,v^+,k+2,\uplim{v^+}))\}\}$\\
      $ \leq $ & & $ 2.\tau+\max \{\spc(\sat(v^+)),$\\
    & & \hspace{1.6cm} $\spc(\satW(W^+,v^+,k+2,\uplim{v^+}))\}$\\
    $\vdots$ &   & \\
      $ \leq $ & & $ \pi.\tau+\max \{\spc(\sat(v^+)),$\\
    & & \hspace{1.6cm} $\spc(\satW(W^+,v^+,\pi,\uplim{v^+}))\}$\\
       $ \leq $ & & $ \pi.\tau+\spc(\sat(v^+))$\\
    \end{longtable}
    
    \noindent Now, concerning the function $\sat$, it also keeps of its argument in memory during recursion in order to range over its $\Diamond$-formulas. obviously, in general, calls $\satW$ need more space than $\sat$ calls. For $\neg\Bi{k}\phi\in u$, let $W^{0,\neg\Bi{k}\phi}$ be the $(k,d(u),d)$-window chosen by $\chooseW(u,\chooseCCS(\{\neg\phi\}\cup\Bim{k}(u)),k), u,k,\uplim{u})$ (it exists, otherwise the algorithm stops). 
    Thus:\\
    $\spc(\sat(u))$\\
$\begin{array}{ll}
     &  \leq \lgu+\max \{\spc(\satW(W^{0,\neg\Bi{k}\phi},u,k,\uplim{u}))\colon\neg\Bi{k}\phi\in u\}\\
    & \leq \lgu+\pi.\tau+\spc(\sat(v^+))
    \end{array}$
    
With respect to the modal depth of the arguments (and since both $d(v^+)< d(u)$ and $\lgx{v^+}<c_{\mathrm{ccs}}.\lgu$) we are left with a recurrence equation of the form: 
    $\spc(d(u))\leq c_{\mathrm{ccs}}.\lgu+\pi.\tau+\spc(\sat(d(u)-1))$ with $\spc(0)=1$; if we set $c'=c_{\mathrm{ccs}}.\pi$ and since $\tau\geq \lgu$, this yields $\spc(\sat(d(u))\leq c'.\tau.(c'.\tau+c'.\tau)\leq 2.(c'.4.\lgx{W^+})^2$. From lemma \ref{prolongation-to-infinite-window}, we know that $\card({W^+})=Q(\lgu)$ (for a polynomial $Q$ of degree $\pi+1$), with elements of size bounded by $c_{\SF(u)}.\lgu$, hence $\lgx{W^+}=c_{\SF(u)}.\lgu.Q(\lgu)$ and $\spc(\sat(u))={\cal O}(R(u))$ for some polynomial $R$ of degree $2\pi+4$. 
\end{proof}
\begin{theorem}
    $DP_{\Kpi}$ is $\PSPACE$-complete. 
\end{theorem}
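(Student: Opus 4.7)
The plan is to assemble the three main technical lemmas already established together with the hardness remark from section \ref{pi-density}. The $\PSPACE$-hardness half is already in hand: as noted on page \pageref{pspace-hardness}, $\Kpi$ is a conservative extension of ordinary $\K$ (take any formula built solely from $\Bi{0}$, and no density axiom can interfere with its $\Kpi$-validity), so by Ladner's classical theorem $DP_{\Kpi}$ inherits $\PSPACE$-hardness from $\K$. This direction requires nothing beyond citing the conservativity observation.

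For the upper bound, I will reduce the validity problem to (un)satisfiability in the usual way: a formula $\phi$ is valid in the class of $\Pi$-dense frames iff $\neg\phi$ is not $\Kpi$-satisfiable, and by Proposition~\ref{prop-CCS}.\ref{Five} the $\Kpi$-satisfiability of $\{\neg\phi\}$ is equivalent to the existence of some $u\in\CCS(\{\neg\phi\})$ that is $\Kpi$-satisfiable. On an input $\phi$, the machine therefore guesses $u$ via $\chooseCCS$ and launches $\sat(u)$. By Lemma~\ref{soundness}, if $\neg\phi$ is $\Kpi$-satisfiable then some non-deterministic run of $\sat$ returns \true; by Lemma~\ref{completeness}, if some run returns \true then $\neg\phi$ is $\Kpi$-satisfiable. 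Hence acceptance of the non-deterministic procedure coincides exactly with $\Kpi$-satisfiability of $\neg\phi$, i.e.\ with non-validity of $\phi$.

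Lemma~\ref{polyspace} guarantees that every such run uses space bounded by $R(\lgx{\phi})$ for a fixed-degree polynomial $R$, so the satisfiability problem lies in $\NPSPACE$. Savitch's theorem then gives $\NPSPACE = \PSPACE$, and since $\PSPACE$ is closed under complement the validity problem $DP_{\Kpi}$ is itself in $\PSPACE$. Combined with the hardness part, this yields $\PSPACE$-completeness. There is no real obstacle at this final step: all the difficulty has already been absorbed into the recursive-window machinery (in particular Lemmas~\ref{k2kplusone-window}, \ref{prolongation-to-infinite-window} and \ref{from-model-to-window}) and into the detailed space accounting of Lemma~\ref{polyspace}; the only thing to be careful about is to invoke Savitch on the non-determinism hidden in $\chooseCCS$, $\chooseW$ and $\nextW$, which is standard.
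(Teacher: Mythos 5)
Your proposal is correct and follows essentially the same route as the paper: hardness via conservativity over $\K$ (Ladner), and membership by running $\sat$ non-deterministically in polynomial space (Lemmas~\ref{soundness}, \ref{completeness}, \ref{polyspace}), then applying Savitch's theorem and closure of $\PSPACE$ under complement to pass from satisfiability to validity. The extra detail you supply (the reduction of validity to unsatisfiability of $\neg\phi$ and the $\CCS$ step via Proposition~\ref{prop-CCS}.\ref{Five}) merely makes explicit what the paper's terse proof leaves implicit.
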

\begin{proof}
    On the one hand, $DP_{\Kpi}$ is $\PSPACE$-hard as recalled p.\ \pageref{pspace-hardness}, and on the other hand, function $\sat$ can decide non-deterministically and within polynomial space whether a set of formulas is satisfiable, hence satisfiability is in $\NPSPACE$, i.e.\ in $\PSPACE$ (by Savitch' theorem), and so $DP_{\Kpi}$ is in co-$\PSPACE$, i.e.\ in $\PSPACE$. 
\end{proof}

\section{Consequence on $K+\Diamond p\rightarrow\Diamond\Diamond p$}\label{KDe}
In \cite{BalGasq25}, the monomodal logic $\Kde=K+\Diamond p\rightarrow\Diamond\Diamond p$ has been proved to be both $\PSPACE$-hard and in $\EXPTIME$. Up to day, its exact satisfiability problem is still open. We will not close it, but will just add a little light on it, thought not very surprising. 

In the sequel, we suppose $\Pi=\nat$.\\ 

The algorithm may be easily adapted. But, before, let us go back to the proof of lemma \ref{prolongation-to-infinite-window} where we naturally assumed that $d(u)\geq \pi$. It is no longer the case and we must change the point of view for determining the cardinal $s(d(u))$ of a window for $(u,v_0)$ since only the decrease of the modal degree along subwindows will lead to an end of recursion. 

Let us have a look at the cardinal $s(d(u))$ of $(k,d(u),d)$-windows for $(u,v_0)$: 
$$\begin{array}{ll}
         s(d(u)) &  = (\mbox{if } n=0 \mbox{ then } 1 \mbox{ else }{d(u)}+\Sigma_{i=0}^{d(u)-1}s(d(u)-1)\\
         & \leq (\mbox{if } n=0 \mbox{ then } 1 \mbox{ else } d(u) + d(u).s(d(u)-1)\\
         &\leq d(u) + d(u)^2(u)+ d(u)^2.s(d(u)-1)\\
         &\cdots\\
         &\leq d(u) + d(u)^2(u)+ \cdot +d(u)^{d(u)}(u)+ d(u)^{d(u)}.s(0)\\
         &\leq f(d(u)) \hfill\mbox{\indent with $f(d(u))=2.d(u)^{d(u)+1}$}\\
    \end{array}$$
And the size $\lgx{W}$ of a $(k,d(u),d)$-window, is bounded by $c_{\SF}.\lgu.f(d(u))$. This changes the definition of $\uplim{u}$ but not the algorithm. 

When $d(u)=1$ and for each $\neg\Box\phi\in u$, we just need to consider if there exists $v\in\CCS(\Box^{\,\mhyphen}\cup\{\neg\phi\})$, then for the sake of completeness we would need to add $(v,v)$ to $R$ in order to guaranty that our model is dense (this is harmless for the truth lemma since $d(v)=0$). Then soundness and completeness proofs transfer straightforwardly, at the supplementary condition that in the code of $\satW$ we change ``${\tt or~k=\pi}$'' into ``${\tt or~d(u)=1}$''. Then functions $\sat$ and $\satW$ become (even if the parameter $k$ is now useless):
\setlength{\textfloatsep}{0pt}
\setlength{\floatsep}{0pt}
\begin{algorithm}[H]
 \floatname{algorithm}{Function}
\begin{algorithmic}
\Function{\sat}{$u$}:
\State {return}
\State {\hspace{0.3cm}$u\neq \{\bot\}$}
\State {\hspace{0.1cm}\algand\ }
\State {\hspace{0.2cm}$\all \{\satW(\chooseW(u,\chooseCCS(\{\neg\phi\}\cup\Box^{\,\mhyphen}(u)),k), u,k,\uplim{u})\colon\neg\Box\phi\in u\}$}
\EndFunction
\end{algorithmic}
\end{algorithm}

\begin{algorithm}[H]
\floatname{algorithm}{Function}
\begin{algorithmic}
\Function{\satW}{$W$,$u$,$k$,$N$}:\algorithmiccomment{$W$ is $\win{v}{W}{d(u)}$
}
\If {$N=0$ \textbf{or} $d(u)=1$} \algorithmiccomment{or $\langle (\{\bot\})_{i\in[0:d(u)]}, \emptyset\rangle$ if $\chooseW$}
\State {return \true}    \algorithmiccomment{or $\nextW$ has failed}
\Else 
\State {return}
\State {\hspace{0.83cm} \sat$(v_0)$}
\State {\hspace{0.2cm} $\algand\ \satW(W_0,v_{1},k+1,\uplim{v_1})$}\
\State {\hspace{0.2cm} \algand\ \satW(\nextW$(W,u,k),u,k,N-1))$}
\EndIf
\EndFunction
\end{algorithmic}
\end{algorithm}
Now, we must adapt the proof of lemma \ref{polyspace}. 

\begin{lemma}
    The space needed for the algorithm modified as said above for $\Pi=\nat$ is ${\cal O}(\lgu.Q(d(u)))$ where $Q$ is a polynomial of degree $d(u)+1$. 
\end{lemma}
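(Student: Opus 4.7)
The plan is to mirror the structure of Lemma~\ref{polyspace}, replacing every appearance of the fixed constant $\pi$ by a quantity that depends on $d(u)$, and re-using the window-size bound just derived for $\Pi=\nat$. Concretely:

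First, I would record the new size bound. From the preceding computation of $s(d(u))$ we have $\lgx{W} \leq c_{\SF}\cdot\lgu\cdot f(d(u))$ with $f(d)=2\,d^{d+1}$, which supersedes the polynomial bound used in Lemma~\ref{polyspace}. Consequently the local-state measure $\tau = \lgx{W^+} + \lgu + \lgx{k} + \lgx{\uplim{u}}$ is now of order $O(\lgu\cdot f(d(u)))$ rather than polynomial in $\lgu$ alone. The functions $\chooseCCS$, $\chooseW$ and $\nextW$ remain within this bound since they only have to exhibit/traverse a single $(k,d(u),d)$-window.

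Second, I would redo the accounting for $\satW$. The continuation recursion (third \algand) still reuses memory across its $\uplim{u}$ iterations, so it contributes only one copy of $\tau$ plus a loop counter, exactly as in the original argument. The subwindow recursion (second \algand) now operates on $(v_1,v_0)$ with $d(v_1)<d(u)$, and because the base case in the modified code is ``$d(u)=1$'' rather than ``$k=\pi$'', the induced chain of $\satW$ calls has length at most $d(u)$. The corresponding contributions $\tau(d(u)),\tau(d(u)-1),\ldots,\tau(1)$ telescope, being dominated by the largest term since $f$ grows super-polynomially, so one obtains $\spc(\satW(W,u,k,\uplim{u})) \leq O(\lgu\cdot f(d(u))) + \spc(\sat(v^+))$, where $v^+$ is the deepest CCS reached.

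Third, I would plug this into the outer recurrence for $\sat$. Each call $\sat(u)$ strictly reduces modal depth in every one of its nested recursive calls, so the depth of the $\sat$-recursion chain is bounded by $d(u)$. Unrolling $\spc(\sat(u)) \leq c\cdot\lgu + O(\lgu\cdot f(d(u))) + \spc(\sat(d(u)-1))$ yields an overall bound of the form $O(\lgu\cdot Q(d(u)))$ where $Q(d) = d\cdot f(d)$, i.e.\ a polynomial in $d$ of degree $d(u)+1$ (up to constant and low-order factors absorbed in the big-$O$), matching the claim.

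The main obstacle is to justify that the reuse-of-space argument from Lemma~\ref{polyspace} transfers verbatim to the new setting: one must check that the inner $\satW$ chain really can be charged to $d(u)$ nested levels (rather than $\pi$, which is no longer finite) because each descent in $k$ is now paired with a strict decrease of modal depth through the subwindow leader $v_1$, and that the continuation loop inside $\satW$ still permits over-writing rather than accumulation of $\tau$. Once these two points are confirmed, the remaining bookkeeping is essentially a rewrite of the original chain of inequalities with $\pi$ replaced by $d(u)$ and the window bound $P(\lgu)$ replaced by $\lgu\cdot f(d(u))$.
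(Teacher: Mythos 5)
Your proposal takes essentially the same route as the paper's own proof: it rewrites the chain of inequalities of Lemma~\ref{polyspace} with the recursion depth $\pi$ replaced by $d(u)$ (correctly justified by the modified base case $d(u)=1$ and the strict decrease of modal depth through the subwindow leader $v_1$), and with the window-size bound replaced by $c_{\SF}\cdot\lgu\cdot f(d(u))$, $f(d)=2d^{d+1}$, exactly as the paper does. Your telescoping refinement even gives a slightly tighter estimate ($O(\lgu\cdot d(u)\cdot f(d(u)))$, i.e.\ degree $d(u)+2$ in $d(u)$) than the paper's cruder final bound $c''\cdot\lgu^3\cdot(d(u))^{3d(u)+3}$; both overshoot the stated degree $d(u)+1$ by low-order factors, which is immaterial for the para-$\PSPACE$ conclusion.
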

\begin{proof}
    Among all of these calls, let $w^{d(u)}$ be the argument with modal depth $d(u)-1$ for which $\sat$ has the maximum cost in terms of space. As well, among subwindows of $W$, let $W^{d(u)}$ be the $(k,d(v^{d(u)}),d)$-window, with $d(v^{d(u)})=d(u)-1$, for which the space used by $\satW(W^{d(u)},v^{d(u)},k,\uplim{v^{d(u)}})$ is maximal. The inequalities are anyway similar and leads to:\\ 
    
        $\spc(\satW(W^0,u_{d(u)},k,\uplim{u}))$
    \begin{longtable}{lll}
    $\leq$& $\max \{$&$\tau+\spc(\sat(v^{d(u)-1})),$\\
    &&$\tau+\spc(\satW(W^{d(u)-1},v^{d(u)-1},k+1,\uplim{v^{d(u)-1}})),$\\
    &&$space(\satW(W^{d(u)-1},u,k,\uplim{u}-1))\}$\\
    $\vdots$ & & \\
    $\leq $ &  $ \max \{ $ & $ \tau+\spc(\sat(v^{d(u)-1})) ,$\\
    & & $ \tau+\spc(\satW(W^{d(u)-1},v^{d(u)-1},k+1,\uplim{v^{d(u)-1}})), $\\
    & & $ \spc(\satW(W^{d(u)-1},u,k,0))\} $ \\
     $ \leq $ & & $ \tau+\max \{\spc(\sat(v^{d(u)-1})),$\\
    & & $ \hspace{1.5cm}\spc(\satW(W^{d(u)-1},v^{d(u)-1},k+1,\uplim{v^{d(u)-1}}))\}$\\
      $ \leq $ & & $ \tau+\max \{\spc(\sat(v^{d(u)-1})),$\\
    & &  \hspace{1.5cm} $\tau+\max \{\spc(\sat(v^{d(u)-2})),$\\
    & & \hspace{3cm} $\spc(\satW(W^{d(u)-2},v^{d(u)-2},k+2,\uplim{v^{d(u)-2}}))\}\}$\\
      $ \leq $ & & $ 2.\tau+\max \{\spc(\sat(v^{d(u)-1})),$\\
    & & \hspace{1.6cm} $\spc(\satW(W^{d(u)-2},v^{d(u)-2},k+2,\uplim{v^{d(u)-2}}))\}$\\
    $\vdots$ &   & \\
      $ \leq $ & & $ d(u).\tau+\max \{\spc(\sat(v^{d(u)-1})),$\\
    & & \hspace{1.6cm} $\spc(\satW(W^0,v^0,k+d(u),\uplim{v^0}))\}$\\
       $ \leq $ & & $ d(u).\tau+\max \{\spc(\sat(v^{d(u)-1}))$\\
    \end{longtable}
    Now, w.r.t.\ the degree of $u$ we have: \\
    $\spc(\sat(u))$\\
$\begin{array}{ll}
     &  \leq \lgu+\max \{\spc(\satW(W^{0,\neg\Bi{k}\phi},u,k,\uplim{u}))\colon\neg\Bi{k}\phi\in u\}\\
    & \leq \lgu+d(u).\tau+\spc(\sat(v^{d(u)-1}))\\
    & \leq 2.\tau^2+\spc(\sat(v^{d(u)-1}))\\
    & \leq 2.\tau^2.d(u) \leq 2.\tau^3\leq 8.\lgx{W^{d(u)}}^3\\
    & \leq c''.\lgu^3.(d(u))^{3.d(u)+3} 
    \end{array}$
\end{proof}

Hence, the exponential need of memory of the algorithm is due to the modal depth of the formulas to be checked. From \cite{BalGasq25}, the satisfiability problem has been proved  to be in $\EXPTIME$ by means of selective filtration, and the above algorithm only provides an $\EXPSPACE$ bound. But in the classification of parameterized complexity (see \cite{FlumeGrohe06} for example), the above result allows us to state that the satisfiability problem for $\Kde$ is moreover in para-$\PSPACE$: if $d(u)$ is considered as a fixed parameter, then it is in $\PSPACE$, this cannot be seen in the selective filtration. This may provide a clue for determining its exact complexity. 

\section*{Conclusion}\label{conclusion}
With \emph{recursive windows}, we designed a tool for algorithms for bounded weakly-dense logics and the study of  their complexity. Hopefully they could as well prove useful beyond the strict scope of bounded weak-density for solving open problems: complexity of logics containing several more general expansion axioms of the form $\Diamond_{a} p\rightarrow \Diamond_{a_1}\ldots\Diamond_{a_m} p$ where $a=a_i$ for some $i$) or even when $a$ occurs several times in $\{a_1,\cdots a_m\}$. The complexity of these logics remain an open problem.

\end{document}